\newtheorem{lemma}{\textbf{Lemma}}
\newtheorem{theorem}{\textbf{Theorem}}
\newtheorem{corollary}{\textbf{Corollary}}
\begin{document}
\title{Capacity Estimation for Vehicle-to-Grid Frequency Regulation Services with Smart Charging Mechanism}

\author{Albert Y.S. Lam, Ka-Cheong Leung, and Victor O.K. Li
\thanks{A preliminary version of this paper was presented in \cite{smartgridcomm2012}.}
\thanks{A.Y.S. Lam is with the Department of Computer Science, Hong Kong Baptist University, Kowloon Tong, Hong Kong (e-mail: albertlam@ieee.org).}
\thanks{K.-C. Leung and V.O.K. Li are with the Department
of Electric and Electronic Engineering, The University of Hong Kong, Pokfulam Road, 
Hong Kong (e-mail: \{kcleung, vli\}@eee.hku.hk).}
}


\maketitle

\begin{abstract}
Due to various green initiatives, renewable energy will be massively incorporated into the future smart grid.  However, the intermittency of the renewables may result in power imbalance, thus adversely affecting the stability of a power system.  Frequency regulation may be used to maintain the power balance at all times.  As electric vehicles (EVs) become popular, they may be connected to the grid to form a vehicle-to-grid (V2G) system.  An aggregation of EVs can be coordinated to provide frequency regulation services.  However, V2G is a dynamic system where the participating EVs come and go independently. Thus it is not easy to estimate the regulation capacities for V2G. In a preliminary study, we modeled an aggregation of EVs with a queueing network, whose structure allows us to estimate the capacities for regulation-up and regulation-down, separately.  The estimated capacities from the V2G system can be used for establishing a regulation contract between an aggregator and the grid operator, and facilitating a new business model for V2G. In this paper, we extend our previous development by designing a smart charging mechanism which can adapt to given characteristics of the EVs and make the performance of the actual system follow the analytical model. 
\end{abstract}
\begin{IEEEkeywords}
Capacity, queueing model, regulation services, vehicle-to-grid.
\end{IEEEkeywords}

\IEEEpeerreviewmaketitle

\section{Introduction}

\IEEEPARstart{F}{or}  
a reliable power system, power balancing needs to be maintained at all
times; power generation and consumption must always be equal.  
Traditional power generations (e.g., thermal power stations) and the renewables serve in
the day-ahead market~\cite{CA2011energy}.  One of the most
challenging problems of incorporating the renewables into the power system is
their intermittency, rendering it difficult to predict the amount of power
generated from the renewables accurately.  It is possible that the resulting
generation from the day-ahead market are excessive or deficient
compared with the predicted amount.  The real-time market bridges the residual gap
between the power generation and the actual demand, accomplished by
the ancillary services, including frequency regulation, spinning reserve,
supplemental reserve, replacement reserve, and voltage
control~\cite{frequencyregulation}.  According to the United States (U.S.)
Federal Energy
Regulatory Commission, ancillary services are ``those services necessary to
support the transmission of electric power from seller to purchaser given the
obligations of control areas and transmitting utilities within those control
areas to maintain reliable operations of the interconnected transmission
system''~\cite{FERC}.  Regarding load balancing, spinning, supplemental
reserve, and replacement reserves are for contingency purposes while frequency
regulation tracks on a minute-to-minute basis.  In this paper, we
focus on the ancillary service given by frequency regulation.

For a power system to function properly, the operating frequency should be maintained close to its nominal value, e.g., 60 Hz in the U.S. excessive power generated (i.e., generation is larger than consumption) will drive the system frequency higher than the nominal setting while a deficiency of generation results in a smaller system frequency. Frequency regulation is the measure of adjusting the system frequency to the nominal value by providing small power (positive or negative) injections into the grid. Many Regional Transmission Organizations (RTOs) and Independent System Operators (ISOs), e.g., PJM, simply call this service ``regulation'' \cite{PJM}.
The balance of generation and demand between control areas is measured in terms of area control error (ACE). Each control area generates automatic generation control (AGC) signals based on its ACE values and the regulation resources respond to the AGC signals to perform regulation. This is achieved through a real-time telemetry system and controlled by the grid operator.
Purchase and sale of regulation services are accomplished in the regulation market managed by the corresponding ISO/RTO. Consider PJM as an example \cite{PJM}. Resource owners submit offers to the Market Clearing Engine, which optimizes the RTO dispatch profile and determines corresponding clearing prices. The market is cleared between the regulation resources and service purchasers with the clearing prices. The details can be found in \cite{PJM,AGC,iagc}.

There have been some studies about integrating renewables into the grid more
reliably and efficiently, such as \cite{sellingwind}.  One proposed solution
is the introduction of energy storage to \emph{defer} the excess for the
future deficient.  Examples of energy storage include batteries, flywheels, and
pumped storage.  In the near future, one of the most realistic forms is
batteries.  This can be justified by the expanding markets of plug-in hybrid
electric vehicles or simply electric vehicles (EVs).  For example, it is
forecast that there will be 2.7 million EVs on the road in the U.S. by
2020~\cite{evnum}.  In California, it is expected that roughly 70\% of new
light-duty vehicles and 60\% of the fleets will be EVs~\cite{CA2011energy}.  
The integration of EVs into the power grid is called the vehicle-to-grid (V2G)
system depicted in Fig.~1 of \cite{smartgridcomm2012}.

Frequency regulation requires power in the order of MW while each EV can only supply
power around 10-20~kW~\cite{battery}.  In order to provide regulation service
from the V2G system, an aggregation of EVs is necessary and an aggregator
coordinates a group of EVs.  The aggregators thus provide regulation services
to the grid, which are controlled and coordinated by the operators.  In
general, an aggregator can be a parking structure or a facility
coordinating the EV activities of the households in a residential area.

To implement regulation in V2G, the aggregators need to make contracts with the
grid operators.  The V2G system can support both \emph{regulation-up} (RU) and
\emph{regulation-down} (RD) services.  The former means that the grid does not have
enough power supply and extra power sources (e.g., V2G) provide the shortfall.
The latter refers to the situation in which extra power loads are needed to
absorb the excessive power.

According to \cite{MISObook}, short-term stored-energy resources, e.g., batteries, are excellent candidates of regulation resources due to their very fast response time to AGC and their capability of reducing $\text{CO}_2$ emissions. Many ISOs, including New York ISO, ISO New England, and California ISO, have been integrating short-term energy storage resources into their regulation markets.
A team, composed of experts from the University of Delaware and PJM, conducted a practical demonstration to show that V2G is capable of providing real-time frequency regulation \cite{delaware}.
Experienced distribution engineers pointed out that an EV has no difference from a distributed generator or additional load for regulation in the technological viewpoint \cite{delaware}.
Moreover, \cite{financialreturn} revealed that V2G has significant potential for financial return from frequency regulation.
Based on \cite{traveltrends}, private vehicles in the U.S. are driven less than an hour a day on the average and this implies that we may intelligently utilize EVs for other purposes when they are parked idly. 
From all these, we can see that V2G, as a regulation resource, will not bring significant technological challenges to the existing power system when regulation is taking place. Moreover, it has a great financial incentive to be implemented in the practical power system.

In the regulation market, a regulation resource owner needs to submit the capacity of its resource for the locational marginal price forecasts, which are in turn used to determine the market clearing prices.
One type of charges for the regulation services is capacity
payment~\cite{V2G_fundamentals}.  It refers to the service
charges due to the V2G system only guaranteeing power support when the grid
requires RU or RD.  In other words, the V2G system gets paid even
without any actual power transfer.  The grid operator pays for the service
according to the expected amount of power to be supplied and absorbed for
RU and RD, respectively.  
Unlike the traditional generators with high controllability, it is not easy to determine the regulation capacities for V2G. The main reason is that V2G is a dynamic system, in which the governed EVs are all autonomous. The number of EVs managed by an aggregator varies from time to time and thus actual regulation capacities contributed by the EVs are also varying. Although V2G is practically feasible \cite{delaware} and financially favorable \cite{financialreturn}, we need to estimate the regulation capacity of V2G so as to put it into operation extensively.
In this paper, we focus on capacity estimation of the V2G system for an aggregator which can help estimate the
total profit and set up a contract between an aggregator and the grid
operator. 
A smart charging mechanism is designed to enhance the flexibility of the system.
 Due to the dynamics of EVs and the similarities between the
batteries of V2G (for power) and the buffers of the communication networks
(for data packets), we estimate the V2G capacity for regulation with the
queueing theoretic approach, which has been widely used for performance
analysis in communication networks~\cite{networkqueueing}.

The rest of this paper is organized as follows.  We give some related work of
 V2G studies in Section~\ref{sec:related} and a system overview in
Section~\ref{sec:overview}.  Section~\ref{sec:model} presents our analytical
model with the RU and RD capacities derived. The smart charging mechanism is discussed in Section~\ref{sec:scm} and
a performance study of the V2G system for the power regulation
services is presented in Section~\ref{sec:perf}.  Finally,
Section~\ref{sec:concl} concludes our work.

\section{Related Work} \label{sec:related}

The preliminary version of this work can be found in \cite{smartgridcomm2012}.
In \cite{smartgridcomm2012}, we defined a queueing network model to estimate the RU and RD capacities. However, we assumed that there exists a smart charging mechanism which makes the service times at various queues exponentially distributed. This exponential distribution property is one of the keys to developing mathematically tractable closed-form solutions for the capacities. In this paper, we relax this assumption by explaining how such  smart charging mechanism works. It allows the model to function even when the attributes of EVs are distributed in unknown distributions. We also perform simulation to verify the behavior of this mechanism when applied to various queues in the model.

There are many studies on V2G since it is expected to be a major component in
the future smart grid.  In \cite{V2G_fundamentals} and
\cite{V2G_implementation}, V2G was systematically introduced with studies on the business model for V2G.  They gave information of
different kinds of EVs and different power markets, including baseload power,
peak power, spinning reserves, and regulation.  The merits of V2G are quick
response and high-value services with low capital costs, but V2G has shorter
lifespans and higher operating costs per kWh.  
They gave some rough idea about the scale of V2G so as to make it comparable
with the traditional regulation from generators.  
V2G energy trading was studied as an auction in \cite{doublelayer}. 
Interested readers can refer
to \cite{impact_review} for a comprehensive review on the impact of V2G on
distribution systems and utility interfaces of power systems.

Queueing theory has been used to study the aggregate behavior of EVs.
In \cite{EV_loadflow}, a simple $M/M/c$ queueing model for EV charging was
devised and a similar idea was also adopted in \cite{V2GParking} to determine V2G capacity. 
Ref. \cite{chargingmodel} suggested an $M/M/\infty$ queue with random interruptions to model the EV charging process and analyzed the dynamics with time-scale decomposition. 
The ``service'' process adopted was assumed to be exponential but it may not be practical unless there is a special arrangement to conform with the exponential property. In this work, we will design a smart charging mechanism to overcome this problem.

Some existing work investigated V2G frequency regulation and the capacities of V2G systems.
In \cite{optimal_aggregator}, an optimal charging control scheme for maximizing the revenue of an EV from supporting frequency regulation was proposed.  
In \cite{V2G_QP}, the problem was formulated as a quadratic program and an efficient algorithm considering discharge was devised.
Ref. \cite{estimation_capacity} considered the user pattern to develop an approximate probabilistic model for achievable power capacity, which was then utilized to determine the contracted power capacity.
However, this contracted power capacity (similar to the capacity aimed to be achieved in this work) depends on the payment methodology, which is a business consideration rather than an engineering issue.  
In \cite{V2G_fundamentals}, the V2G capacity was achieved by taking the maximum of the current-carrying capacity of the connecting wires and the power available from the vehicle battery.  The former can be improved with advances in wiring technology. The latter simply converts the stored energy into power with the consideration of efficiency factors.
However, the achieved capacity in \cite{V2G_fundamentals} is for one single vehicle, which is assumed to be static. For an aggregation of EVs with stochastic arrivals and departures, this estimation may not be very practical.
In \cite{V2GCap_dynamicEV}, the V2G capacity was estimated using dynamic EV scheduling, where the design was mainly targeted on the building energy management system. Its proposed algorithm utilized  the forecasted building load demand and EV charging profiles to estimate the V2G capacity. This design may not be applicable to frequency regulation as load demand of the supported region may not be available to an aggregator.
In this work, we  compute the separate RU and RD capacities for an aggregation of EVs without any assumptions on the stored energy of EVs and their durations of stay. Moreover, our computation does not  rely on any payment methodology and this provides flexibility for other business considerations.
To the best of our knowledge, there is no unified study on the capacity management for both RU and RD in the V2G system.

There are many studies about coordinated EV charging.
The issue of time-varying electricity price signals on cost-optimal charging was considered in~\cite{time-varying_price}. 
A decentralized algorithm for coordinating the charging/discharging schedules of EVs in order to meet the regulation demand was devised in~\cite{online_scheduling}. 
Refs. \cite{charging_net} and \cite{charging_flow} focused on the network aspects.
Ref. \cite{charging_net} investigated the impact of EV charging on distribution grids and showed that controlled charging can result in significant reduction of overloaded network components. 
In \cite{charging_flow}, the authors proposed a linear-approximation-based framework for online adaptive EV charging. It could reduce the violations on various network limits, e.g., flow limit and voltage magnitude limit, due to high penetration of EVs.
Ref. \cite{charging_price} performed EV charging according to the electricity price. Cao et al. \cite{charging_price} proposed a control method to coordinate charging with the time-of-use market price.
Ref. \cite{charging_building} studied a EV charging method for smart homes and buildings in the presence of photovoltaic systems.
In \cite{charging_mobility}, Miao et al. proposed a mobility-aware charging strategy for globally optimal energy utilization by means of appropriately routing mobile EVs to charging stations with the assistance of vehicular ad-hoc network.
All these efforts attempt to design EV charging strategies based on the various system objectives. In this paper, we take a different perspective; we construct a smart charging mechanism for capacity management for V2G regulation services.

\section{System Overview} \label{sec:overview}

Each EV is assumed to be autonomous.  It can participate and leave the V2G
system according to the schedule of the EV owner.  Once an EV is connected or
plugged to the system, it will be actively charged and/or support regulation
until it departs.  When being actively charged, it pays for the amount of energy
consumed.  While it is supporting the regulation, it receives payment for
providing the service.  Regulation can result in either EV charging or
discharging, depending on whether RU or RD is requested.  
Thus, it is possible for an EV to get paid while it is being charged (i.e., in an
RD event). Charging events in which an EV requests
charging itself and supports regulation are called \textit{active charging} and
\textit{reactive charging}, respectively.  
A discharging event happens only
when an EV participates in supporting RU.  Both the residual energy
stored in the battery of the EV at the time it arrives and the energy charged
from active or reactive charging can be used to support RU.  However,
an EV cannot support RU when its battery is fully discharged.
Similarly, an EV cannot participate in supporting RD when its
battery is fully charged up.  Since the battery capacity is finite, the amount
of energy stored in a battery affects its potential for supporting the
RU and RD services.  In this paper, we aim to estimate
the capacities of an aggregator for regulation services so that it will be
beneficial for an aggregator to establish a contract with the grid operators.
Hence, we only consider the events of active charging.  The charging and
discharging rates due to regulation are small enough so that the estimated
capacities for regulation services are not affected by charging and discharging
events due to regulation.

Now we focus on a particular aggregator.  We denote the set of EVs, each of
which has registered at the aggregator for providing the regulation service, by
$\mathcal{I}$.  The events associated with each EV and among EVs are
independent with each other.  
All EVs are assumed to be heterogeneous such that
they can be equipped with batteries of different capacities when fully charged.  
The state-of-charge (SOC) of an EV refers to the amount of energy stored in its
battery normalized with the maximum capacity.  We denote SOC of EV~$i$ at
time~$t$ by $x_i(t)$.  Without loss of generality, we assume
$x_i(t) \in [0,1], \forall i\in \mathcal{I}$.  We also define the
\textit{target SOC} of EV~$i$ as the amount of energy, normalized with the
maximum battery capacity, that the EV's owner aims to reach when it departs,
given in a range $[\underline{x}_i, \overline{x}_i]$, where $\underline{x}_i$
and $\overline{x}_i$ are the lower and upper limits of the target SOC of
EV~$i$, and $0\leq \underline{x}_i \leq \overline{x}_i \leq 1$.  In other
words, if EV~$i$ leaves the system at time~$t'$, it aims to satisfy
$\underline{x}_i \leq x_i(t') \leq \overline{x}_i$.  If the target SOC is
merely a value, we have $\underline{x}_i = \overline{x}_i$. 
 
The lower SOC target threshold $\underline{x}_i$ represents the minimum
targeted amount of energy, normalized with the maximum battery capacity,
retained for EV~$i$ when it departs from the system.  Hence, it is designed to
meet the mobility pattern of EV~$i$.  For example, an EV which travels a lot in
between two successive chargings requires a higher $\underline{x}_i$.  If an EV
can be charged quite frequently, a lower $\underline{x}_i$ may be sufficient to
support its operation.  On the other hand, $\overline{x}_i$ is defined for
regulation.  Recall that a fully charged EV cannot provide the RD
service.  $\overline{x}_i<1$ means that EV~$i$ reserves room of size
$(1 - \overline{x}_i)$ for later RD opportunities or other
purposes.  At time~$t$, if $x_i(t)$ is smaller than $\underline{x}_i$, active
charging always happens in order to bring SOC to the target range.  However,
active charging must stop when $x_i(t)$ reaches $\overline{x}_i$, since no
future RU event is guaranteed to happen in order to bring SOC back
to the target range.
 
Recall that regulation can result in charging or discharging to an EV.  We can
increase $x_i(t)$ by both active and reactive (i.e., RD)
chargings, while we can only reduce $x_i(t)$ by RU.  Hence, when
EV~$i$ is still connected to the system, it will be in one of three states
according to the value of $x_i(t)$, each of which supports different 
regulation services, as follows:
\begin{itemize}[]
\item State 1) $x_i(t) \leq \underline{x}_i$:
Only RD (reactive charging) is allowed.

\item State 2) $\underline{x}_i< x_i(t) < \overline{x}_i$:
Both RU and RD are allowed.

\item State 3) $x_i(t)\geq \overline{x}_i$:
Only RU (discharging) is allowed. 
\end{itemize} 

For simplicity in the analysis, we do not consider that the EVs are actually
charged or discharged due to regulation in this paper.  
Let $r_i(t) \geq 0$ be
the active normalized charging rate of EV~$i$ at time~$t$ and it is constant
over time in each state, i.e., $r_i(t) = r_i, t \geq 0$.
Consider that EV~$i$ is plugged in
at time~$t$ and its SOC is $x_i(t)$.  If it is actively charged at rate $r_i$,
after a time period $\Delta t$, we have
$x_i(t+\Delta t) = x_i(t) + r_i \Delta t$.

From the standpoint of an EV owner, the primary concern is to charge its EV
such that it has enough battery level to support its operation.  The profit
derived from providing the ancillary services is of secondary concern.  In
other words, an owner considers to provide the ancillary services from its EV
only if the remaining energy (after discharging from providing the ancillary
services) is enough to support its operation.  Hence, we propose the following
simple charging policy: When EV~$i$ arrives at the system with SOC below
$\overline{x}_i$, it will be actively charged until $\overline{x}_i$ is
reached.  Otherwise, no active charging is required.

In fact, we can always set $\underline{x}_i = \overline{x}_i$ to simplify the
system.  EV~$i$ supports RD when $x_i(t)$ is below
$\overline{x}_i$, and it supports RU when $x_i(t)$ goes above
$\overline{x}_i$.  However, suppose $x_i(t) = \overline{x}_i$
when the system is supporting regulation (i.e., it can be actually charged or discharged due to
regulation).  When there exists a random sequence of RU and
RD requests,
the EV will be oscillating between States~1 and 3 previously discussed and
this will make the system unstable.  The introduction of State 2 can help stabilize
the system.

Note that we aim to perform capacity management by estimating the capacities
for regulation to help construct a contract between an aggregator and a grid
operator.  There are different kinds of regulation contracts in the market: 
\begin{itemize}
\item RD only: An EV always absorbs power from the grid to provide the service.  To maximize
the profit, we can simply set $\underline{x}_i = \overline{x}_i = 0$ so as to
reserve the largest room for energy absorption.

\item RU only: An EV always supplies power to the grid when providing the service.  To
maximize the profit, we can simply set $\underline{x}_i = \overline{x}_i = 1$
to preserve as much energy in the battery as possible for future discharging
events.

\item RU and RD: Both RU and RD are allowed.  We would set
$0 < \overline{x}_i <1$ appropriately to balance the demand for RU
and RD. 
\end{itemize}

We consider the V2G system supporting both RU and
RD.  We can define two kinds of capacities for the V2G regulation
services, namely, the \textit{RD capacity} and
\textit{RU capacity}.  The former refers to the total amount of
energy that can be absorbed by the system to support RD.
Similarly, the latter corresponds to the total amount of energy available from
the system to support RU.  Here, we focus on determining the
RU and RD capacities of one particular aggregator.  The
capacity of the whole V2G system can then be seen as the sum of the capacities
of the individual aggregators.  In the next section, we propose an analytical
model to estimate the two capacities of an aggregator for our charging policy.

\section{Analytical Model} \label{sec:model}

In this section, we model an aggregator with a queueing network.  We first
define the settings of the model from the system discussed in
Section~\ref{sec:overview} and give some assumptions.  Then, we construct a
queueing network, which is used to estimate the RU and RD capacities.

\subsection{Settings}

The V2G system is modelled as a queueing network with three queues, namely, the \textit{regulation-down queue} (RDQ), \textit{regulation-up-and-down queue}
(RUDQ), and \textit{regulation-up queue} (RUQ).  When an EV is plugged in at
time~$t$, the decision to join which queue depends on its SOC $x_i(t)$.  If it
is in States 1, 2, and 3 (defined in Section~\ref{sec:overview}) at time~$t$,
it will join the RDQ, RUDQ, and RUQ, respectively.  After joining a particular
queue, the following will happen:

\subsubsection{RDQ}

Each EV $i$ in this queue is actively charged at its own normalized charging
rate $r_i$.  If its SOC reaches $\underline{x}_i$ at time~$t'$, i.e.,
$x_i(t') = \underline{x}_i$, it will leave RDQ and join RUDQ.  The duration is
determined by:
\begin{align}
\Delta t = t' - t
	 = \frac{\underline{x}_i - x_i(t)}{r_i}, \quad x_i(t) < \underline{x}_i.
\label{q1time}
\end{align}
When an EV is actively charged, it gets served in the queue.  It is also
possible for it to depart from the queue before its SOC has reached
$\underline{x}_i$.  This represents the situation that it quits the system. 

\subsubsection{RUDQ}

When EV~$i$ arrives at this queue, it will be actively charged at the
normalized charging rate $r_i$ until its SOC reaches $\overline{x}_i$.  If the
charging process starts at time~$t$ and the EV is charged to $\overline{x}_i$
at time~$t'$, the duration is given by:
\begin{align}
\Delta t = t' - t = \frac{\overline{x}_i -x_i(t)}{r_i}, \quad
\underline{x}_i < x_i(t) < \overline{x}_i.
\label{q2time1}
\end{align}
If the EV joins from RDQ, we have:
\begin{align}
\Delta t = t' - t = \frac{\overline{x}_i - \underline{x}_i}{r_i}.
\label{q2time2}
\end{align}
After charging up to $\overline{x}_i$, the EV departs from this queue and goes
to RUQ.  Similar to RDQ, a departure of an EV from the queue before its SOC
reaching $\overline{x}_i$ corresponds to an EV leaving the system.

\subsubsection{RUQ}

When an EV joins this queue, no active charging takes place.  It will stay in
this queue until it departs from the system.
 
\subsection{Assumptions} \label{subsec:assumptions}

We make the following assumptions to make the analysis mathematically tractable:
\begin{enumerate}
\item The events associated with each EV and among EVs are independent with
      each other.  Each EV arrives at the system randomly, following a
      Poisson process at rate $\lambda$.  Among the EV arrivals, fractions $p_1$, $p_2$,
      and $p_3$ of EVs are in States 1, 2, and 3, respectively, where
      $p_1, p_2, p_3 \in [0, 1]$ and $p_1 + p_2 + p_3 = 1$.

\item There exists a smart charging mechanism
      $M_{SC}:(x_i(t), \underline{x}_i, \overline{x}_i) \mapsto r_i$, which
      assigns the normalized charging rate $r_i$ to EV~$i$ according to its
      current SOC $x_i(t)$ upon its arrival at time~$t$, and its target SOC 
      thresholds $\underline{x}_i$ and $\overline{x}_i$.  With such mechanism, the durations of EVs
      in States 1 and 2 (refer to \eqref{q1time}, and \eqref{q2time1} and
      \eqref{q2time2}, respectively) are exponentially distributed at rates
      $\mu_1$ and $\mu_2$, respectively.

\item There exists a fraction $q_1$ of EVs in State 1 which will directly quit
      the system.  This fraction represents those EVs whose SOCs do not reach
      their lower target SOC limits at their departures from the system.
      Similarly, we have a fraction $q_2$ of EVs which depart from the system in
      State 2.  Note that $q_1$ and $q_2$ already capture those EVs which
      require fast charging.  In other words, they may only stay in the system
      for a short period of time.

\item When an EV is in State 3, no charging would happen. It will remain on
      standby in the system for a period  exponentially distributed with
      rate $\mu_3$.
\end{enumerate}
  
The values of $\lambda$, $p_1$, $p_2$, $q_1$, and $q_2$ can be
determined by statistical measurements from the operations of the charging
facilities.  The smart charging mechanism $M_{SC}$ can help maintain exponential service times in States 1 and 2 and a modified $M_{SC}$ will be adopted to maintain exponential service times in States 3. We will discuss the design of (modified) $M_{SC}$ in Section \ref{sec:scm}.

\subsection{Model} \label{subsec:model}

\begin{figure}[!t]
\centering
\includegraphics[width=2.6in]{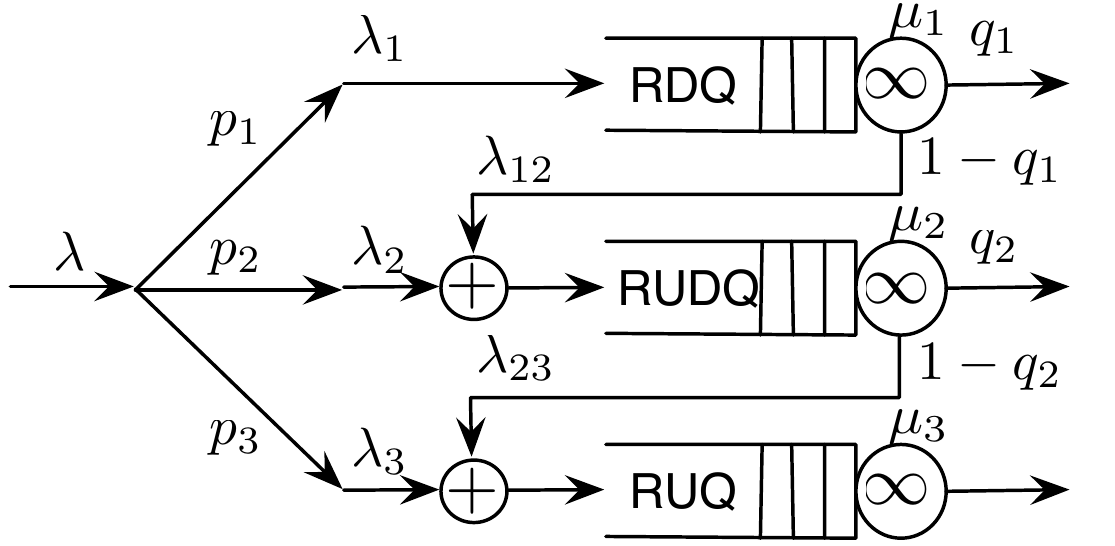} 
\caption{The queueing model.}
\label{fig:t_queues}
\end{figure}

Fig.~\ref{fig:t_queues} depicts the queueing model, where RDQ, RUDQ, and RUQ
model the behaviors of EVs in States 1, 2, and 3, respectively.  Assumption~1
states that we randomly split the EV arrival process into three subprocesses
according to the probability distribution $(p_1, p_2, p_3)$.  Since random
splitting results in independent Poisson subprocesses, the external arrivals
at each queue constitute a Poisson process with rate $\lambda_1$ for RDQ,
$\lambda_2$ for RUDQ, and $\lambda_3$ for RUQ, where $\lambda_1 = p_1 \lambda$,
$\lambda_2 = p_2 \lambda$, and $\lambda_3 = p_3 \lambda$.

When an EV enters RDQ, active charging starts immediately.  In other words, all
EVs in this queue can get served without queueing up.  With Assumption~2, an
EV resides in this queue with a duration exponentially distributed with rate
$\mu_1$.  Hence, RDQ can be modelled as an $M/M/\infty$ queue with arrival rate
$\lambda_1$ and service rate $\mu_1$.  According to \cite{queueingtheory}, the
probability $p_{1, n}$ of having $n$ EVs in this queue in the steady state is:
\begin{align}
p_{1,n} = \frac{(\frac{\lambda_1}{\mu_1})^n e^{-(\frac{\lambda_1}{\mu_1})}}{n!}.
\end{align}

The expected number $L_1$ of EVs charging in RDQ is:
\begin{align}
L_1 = \sum_{n = 1}^{\infty}{np_{1,n}}
    = \frac{\lambda_1}{\mu_1} = \frac{p_1 \lambda}{\mu_1}.
\label{L_1}
\end{align} 

By Burke's theorem~\cite{datanetworks}, the departure process of RDQ is a
Poisson process with rate $\lambda_1$.  With Assumption~3, this Poisson process
is split randomly according to the probability distribution $(q_1, 1 - q_1)$.
$(1 - q_1)$ of EVs enter RUDQ with rate $\lambda_{12} = (1 - q_1) \lambda_1$,
which superposes with the Poisson subprocess for the external arrivals with
rate $\lambda_2$.  Since the superposition of Poisson processes is still a
Poisson process, the combined arrivals to RUDQ constitute a Poisson process with
rate $(\lambda_2 + \lambda_{12})$.  Similar to RDQ, RUDQ can also be modelled
as an $M/M/\infty$ queue with arrival rate $(\lambda_2 + \lambda_{12})$ and
 service rate $\mu_2$.  Hence, the probability $p_{2, n}$ of having $n$ EVs
in this queue is:
\begin{align}
p_{2,n} = \frac{(\frac{\lambda_2 + \lambda_{12}}{\mu_2})^n
	  e^{- \frac{\lambda_2 + \lambda_{12}}{\mu_2}}}{n!}.
\end{align}

The expected number $L_2$ of EVs charging in RUDQ is given by:
\begin{align}
L_2 = \frac{\lambda_2 +\lambda_{12}}{\mu_2}
    = \frac{\lambda (p_1 + p_2 - p_1 q_1)}{\mu_2}.
\label{L_2}
\end{align} 

With Assumption~3, the departure process is Poisson with rate
$(\lambda_2 + \lambda_{12})$, which is split randomly according to the
probability distribution $(q_2, 1 - q_2)$.  $(1 - q_2)$ of EVs enter RUQ as
 a Poisson process with rate
$\lambda_{23} = (\lambda_2 + \lambda_{12}) \cdot (1 - q_2)$ for RUQ.  The
combined arrival process of RUQ is also a Poisson process with rate
$(\lambda_3 + \lambda_{23})$.  With Assumption~4, RUQ can be modelled as an
$M/M/\infty$ queue with arrival rate $(\lambda_3 + \lambda_{23})$ and service
rate $\mu_3$.  Therefore, the probability $p_{3, n}$ of having $n$ EVs in this
queue is:
\begin{align}
p_{3,n} = \frac{(\frac{\lambda_3 + \lambda_{23}}{\mu_3})^n
	  e^{- \frac{\lambda_3 + \lambda_{23}}{\mu_3}}}{n!}.
\end{align}

The expected number $L_3$ of EVs standing by in RUQ can be expressed as:
{\small
\begin{align}
L_3 = \frac{\lambda_3 + \lambda_{23}}{\mu_3}
    = \frac{\lambda (1 - p_1 q_1 - p_1 q_2 - p_2 q_2 + p_1 q_1 q_2)}{\mu_3}.
\label{L_3}
\end{align} 
}

The overall system departure process is a Poisson process superposed by three
individual departure Poisson processes from the three queues.  The overall
departure process has rate:
\begin{align}
\lambda = q_1 \lambda_1 + q_2 (\lambda_2 + \lambda_{12}) +
	  (\lambda_3 + \lambda_{23}).
\end{align}

The duration of each regulation service $\Delta t_{reg}$ is normally short,
such as a few minutes~\cite{frequencyregulation}, while EVs are expected to
switch their states in a relatively much lower rate.  Thus, the mean service
times of the queues, $\frac{1}{\mu_1}$, $\frac{1}{\mu_2}$, and
$\frac{1}{\mu_3}$ are generally much longer than a few minutes.  This is
justifiable as an EV cannot be charged up nor leave the system within a few
minutes on the average.  For each EV, the amount of power $P_{EV}$ contributed
for a regulation event can be determined with the amount of energy required
$\Delta x_{EV}$ by $P_{EV} = \frac{\Delta x_{EV}}{\Delta t_{reg}}.$

As an aggregator normally coordinates hundreds of EVs, $P_{EV}$ contributed by
a single EV is small.  Hence, $\Delta x_{EV}$ would be even smaller.  For a
particular regulation contract with the fixed regulation service duration
$\Delta t_{reg}$, we can fix $P_{EV}$ to be small enough such that the
probability of having a state transition of an EV after an absorption or a
removal of energy of $\Delta x_{EV}$ for a regulation service is almost
negligible.\footnote{Interested readers can refer to Section~\ref{sec:perf}
for performance results.}  Therefore, the capacities for the regulation
services can be estimated based on the numbers of EVs available for regulation.
Due to the types of regulation supported by EVs as described in
Section~\ref{sec:overview}, the steady state RD capacity 
$C_{RD}$ can be computed as:
\begin{align}
C_{RD} = P_{EV}(L_1 + L_2).
\label{Crd}
\end{align}
Similarly, the steady state RU capacity $C_{RU}$ is given by:
\begin{align}
C_{RU} = P_{EV}(L_2 + L_3).
\label{Cru}
\end{align}

\section{Smart Charging Mechanism} \label{sec:scm}

Recall that in Section \ref{subsec:assumptions}, a smart charging mechanism $M_{SC}$ is adopted to assign charging rates to  EVs based on their battery statuses so that their durations in States 1 and 2 follow exponential distributions. We also apply a modified $M_{SC}$ to the EVs in State 3 such that their durations in the system follow an exponential distribution as well. 
However, this $M_{SC}$ is posed as an assumption when the model is developed in Section \ref{sec:model}. The model will not work and the resultant capacity estimation cannot be validated if $M_{SC}$ does not exist. To complete the model, in this section, we discuss the design of such a smart charging mechanism.

\begin{figure}[!t]
\centering
\includegraphics[width=3.5in]{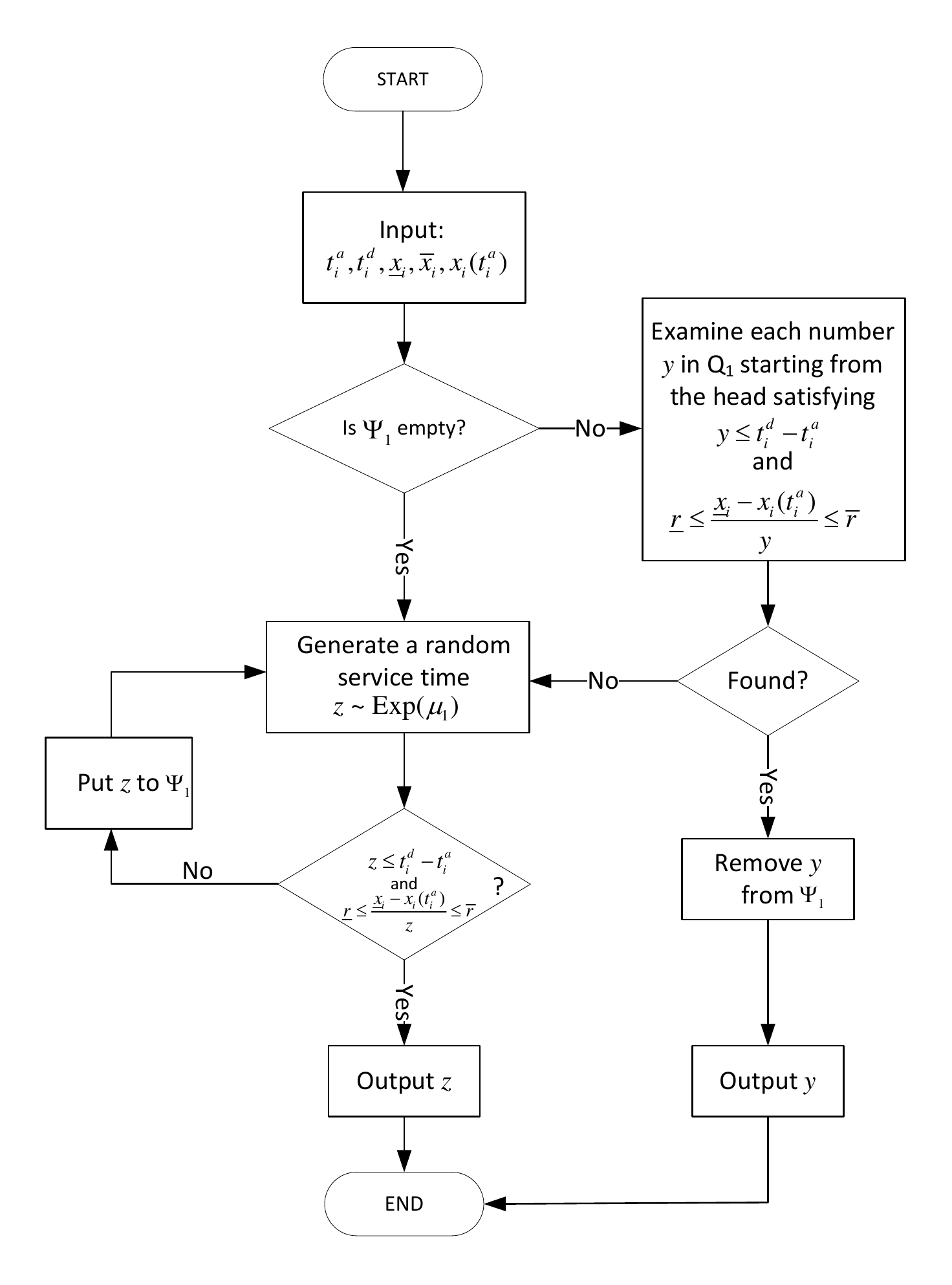} 
\caption{Flow chart of the smart charging mechanism for RDQ.}
\label{fig:scm}
\end{figure}
When EV $i$ arrives at the system, it specifies its arrival time $t_i^a$, its expected departure time $t_i^d$, its lower and upper SOC target thresholds $\underline{x}_i$ and $\overline{x}_i$, and its initial SOC $x_i(t_i^a)$.
The main purpose of the smart charging mechanism is to assign a service
duration $w_i$ to EV $i$ such that:
\begin{enumerate}
\item the service times assigned to a set of EVs statistically follow the exponential distribution;
\item the service time should not be longer than the expected duration of the
EV staying in the system; and
\item the required charging rate should fall into the range of $[\underline{r},\overline{r}]$, where $\underline{r}$ and $\overline{r}$ are the lower and upper charging rate limits supported by the system.
\end{enumerate}
Point 1 ensures that an analytical model for capacity characterization can be developed based on Section \ref{subsec:model}. Point 2 guarantees that the EVs will not stay longer in the system than expected.  That is, $w_i$ should be smaller than or equal to $t_i^d-t_i^a$. For Point 3, when the EVs stay in RDQ and RUDQ, their batteries need to be charged up to certain levels, and thus, the charging rates should be feasible for the system.
For a particular queue, $M_{SC}$ aims to assign the arriving EVs with service durations, which statistically follow an exponential distribution. Upon an arrival, $M_{SC}$ will assign a service duration to the EV. In general, not every randomly generated service duration fits the condition of the EV. However, we can reserve any unfit random service duration for another EV which comes at a later time. We introduce Lemma \ref{lemma:permutation} and Corollary \ref{col:realization} to formally provide the underlying mathematical reasoning.
\begin{lemma} \label{lemma:permutation}
Consider that the sequence $y(t)=[y_1, y_2,\ldots,y_t]$ is a realization of the independent and identically distributed (i.i.d.) random process $\{Y_i\}$. Let $\sigma$ be any permutation of the indices $1, 2, \ldots, t$. $y'(t)=[y_{\delta(1)}, y_{\delta(2)}, \ldots,y_{\delta(t)}]$ is also a realization of $\{Y_i\}$.
\end{lemma}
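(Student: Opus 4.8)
The plan is to reduce the statement to the standard fact that a finite collection of i.i.d.\ random variables is \emph{exchangeable}, i.e., its joint distribution is invariant under any permutation of the indices. Since the phrase ``$y(t)$ is a realization of $\{Y_i\}$'' means precisely that $y(t)$ is a sample drawn from the joint law of the random vector $(Y_1,\ldots,Y_t)$, it suffices to show that the permuted vector $(Y_{\sigma(1)},\ldots,Y_{\sigma(t)})$ has the same joint distribution as $(Y_1,\ldots,Y_t)$. Once this equality of laws is established, $y'(t)$, being the corresponding reordering of the very same sample $y(t)$, is automatically a sample from that identical law, which is exactly the claim.

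First I would fix the common distribution function $F$ of the process $\{Y_i\}$ and write the joint distribution of the original vector, using independence together with identical distribution, as
\begin{align}
P\bigl(Y_1 \le a_1,\ldots,Y_t \le a_t\bigr) = \prod_{i=1}^{t} F(a_i).
\end{align}
Next I would record two elementary facts about the reindexed collection $\{Y_{\sigma(i)}\}$: each $Y_{\sigma(i)}$ still has marginal distribution $F$ (identical distribution is preserved under relabeling), and the family $Y_{\sigma(1)},\ldots,Y_{\sigma(t)}$ remains mutually independent (a reindexing of an independent family is independent). Consequently the joint distribution of the permuted vector factors in exactly the same manner,
\begin{align}
P\bigl(Y_{\sigma(1)} \le a_1,\ldots,Y_{\sigma(t)} \le a_t\bigr) = \prod_{i=1}^{t} F(a_i).
\end{align}

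The key step is then simply that the right-hand sides of the two displays coincide, because a finite product of real numbers is invariant under reordering of its factors. Hence $(Y_{\sigma(1)},\ldots,Y_{\sigma(t)})$ and $(Y_1,\ldots,Y_t)$ share the same joint law, and $y'(t)$ is indeed a realization of $\{Y_i\}$. I do not expect a genuine obstacle here, as the result is a one-line consequence of exchangeability; the only point that demands care is making the notion of ``realization'' precise enough that equality of joint laws is exactly what must be shown. I would also note that the index map written as $\delta$ in the displayed sequence is the permutation $\sigma$ introduced in the statement, and that the argument is purely distributional, depending on the law of $\{Y_i\}$ rather than on any particular realization $y(t)$.
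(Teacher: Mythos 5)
Your proof is correct and takes essentially the same route as the paper's: both rest on the fact that the i.i.d.\ property makes the joint law invariant under any permutation of the indices (exchangeability). The paper's own proof is a one-liner --- the permuted sequence contains the same set of numbers, so by the i.i.d.\ property it is again a realization --- and your explicit factorization $P(Y_{\sigma(1)} \le a_1,\ldots,Y_{\sigma(t)} \le a_t) = \prod_{i=1}^{t} F(a_i)$ simply makes rigorous the exchangeability argument the paper leaves implicit (you also correctly read the paper's $\delta$ as a typo for $\sigma$).
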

\begin{proof}
Regardless of the order, $y(t)$ and $y'(t)$ contain the same set of numbers $\{y_1,\ldots,y_t\}$. As $\{Y_i\}$ is i.i.d., $y'(t)$ is also a realization of $\{Y_i\}$.
\end{proof}
\begin{corollary} \label{col:realization}
Consider that $y(t)=[y_1, y_2,\ldots,y_t]$ is a realization of the i.i.d. $\{Y_i\}$.
For $1\leq k\leq t'\leq t$, $y'(t)=[w_1, \ldots,y_{k-1},y_{k+1},\ldots,y_{t'},y_k,y_{t'+1},\ldots,y_{t}]$ is also a realization of $\{Y_i\}$.
\end{corollary}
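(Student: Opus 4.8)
The plan is to observe that the sequence $y'(t)$ is obtained from $y(t)$ by a single, explicit permutation of the index set $\{1,2,\ldots,t\}$, after which Corollary~\ref{col:realization} becomes an immediate special case of Lemma~\ref{lemma:permutation}. First I would read off the effect of the rearrangement: the entries with indices below $k$ and above $t'$ stay in place, while the contiguous block $[y_k, y_{k+1}, \ldots, y_{t'}]$ is rewritten as $[y_{k+1}, \ldots, y_{t'}, y_k]$. In words, the element $y_k$ is delayed until position $t'$ and each of $y_{k+1}, \ldots, y_{t'}$ advances one slot earlier, which is exactly a one-step cyclic rotation of that block.

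Next I would make the permutation precise. Writing $y'_j = y_{\sigma(j)}$, the map is the identity for $j < k$ and for $j > t'$, with $\sigma(j) = j+1$ for $k \leq j \leq t'-1$ and $\sigma(t') = k$. The only verification needed is that $\sigma$ is a bijection of $\{1,\ldots,t\}$: outside $[k,t']$ every index is fixed, and on $[k,t']$ the map is the single cycle $(k\; k{+}1\; \cdots\; t')$, so no index is repeated or omitted. The degenerate case $k=t'$ reduces $\sigma$ to the identity, which is trivially a permutation, so the claim holds there as well.

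Finally, having identified $\sigma$ as a genuine permutation of the indices, I would apply Lemma~\ref{lemma:permutation} verbatim: since $y(t)$ is a realization of the i.i.d.\ process $\{Y_i\}$ and $y'(t)$ is its image under the index permutation $\sigma$, it follows that $y'(t)$ is also a realization of $\{Y_i\}$. The main difficulty here is not conceptual but clerical, namely matching shifted positions to their indices carefully enough to confirm that the block rotation is indeed a bijection and that no entry is duplicated or lost; once that is checked, the corollary follows directly from the lemma. (I would also flag the apparent typo in the statement, where the leading entry ought to read $y_1$ rather than $w_1$.)
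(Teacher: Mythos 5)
Your proof is correct and follows exactly the route the paper intends: the corollary is stated without an explicit proof precisely because it is the special case of Lemma~\ref{lemma:permutation} in which $\sigma$ is the one-step cyclic rotation of the block $[k,t']$, which is what you construct and verify. Your identification of the typo ($w_1$ should read $y_1$) is also correct.
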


Since RDQ, RUDQ, and RUQ have different expected service times and SOC charging ranges (see Section \ref{sec:model}), we apply variants of the smart charging mechanism for the three queues and their design principles are largely similar. We will explain the detailed design for RDQ and then point out the differences for RUDQ and RUQ:

\subsubsection{RDQ}
The design principle is to set the charging rates $r_i$'s of the EVs by assigning with corresponding $w_i$'s, where $w_i$'s follow the exponential distribution with mean $\mu_1$ as much as possible.   
Consider a sequence of exponentially distributed numbers $[y_1,y_2,\ldots]$ with mean $\mu_1$, and they are potential service times of  EVs in State 1. The idea is that, when an EV comes, we assign the $y_j$ with the smallest index $j$ to its service time such that $y_j$ fits its specifications. Once $y_j$ has been adopted, we remove it from the sequence.
Fig. \ref{fig:scm} shows the implementation details. We maintain a first-in-first-out queue $\Psi_1$ to temporarily store the previously unadopted random numbers for $y$.
Suppose that EV $i$ with $t_i^a$, $t_i^d$, $\underline{x}_i$, $\overline{x}_i$, and $x_i(t_i^a)$ enters the queue. We first check if there are any unassigned numbers in $\Psi_1$. If so, we examine each number $y$ in $\Psi_1$ to see if $y$ satisfies 
\begin{align}
y\leq t_i^d - t_i^a
\label{eq:durationCons}
\end{align}
and
\begin{align}
\underline{r}\leq \frac{\underline{x}_i - x_i(t_i^a)}{y} \leq \overline{r}.
\label{eq:rateCons}
\end{align}
If there are multiple qualified $y$'s, we select the $y$ which was the earliest to be generated. Then we remove $y$ from $\Psi_1$ and set EV $i$'s service time $w_i = y$. 
Condition \eqref{eq:durationCons} means that the service time $w_i$ will not be longer than the EV's expected duration of stay, i.e., $t_i^d - t_i^a$. Since  EV $i$ in this queue will be charged up to $\underline{x}_i$,  the amount of energy will be charged  is $\underline{x}_i - x_i(t_i^a)$, and thus, the required charging rate is $r_i=\frac{\underline{x}_i - x_i(t_i^a)}{y}$.  Condition \eqref{eq:rateCons} ensures that $r_i$ can be supported by the system.
We examine from the head of the queue to ensure that the earliest qualified number in the queue can be adopted first. 
If there is no qualified $y$ in $\Psi_1$,  we keep generating random numbers $z$'s which are exponentially distributed with mean $\mu_1$ until we get one $z$ satisfying \eqref{eq:durationCons} and \eqref{eq:rateCons}. All those unqualified numbers will be queued up in $\Psi_1$.
Note that a random service time $y$ which violates the conditions of a particular EV, i.e., \eqref{eq:durationCons} and \eqref{eq:rateCons}, may be qualified with another EV.  Our goal is to keep the length of $\Psi_1$ at all times as short as possible. In this way, we just postpone some random numbers used at a later time. Here we just repeatedly apply Corollary \ref{col:realization} to design $M_{SC}$.

\subsubsection{RUDQ}
The $M_{SC}$ design for RUDQ is similar to that for RDQ but the random numbers should follow exponential distribution with mean $\mu_2$ as much as possible. All EVs will be charged up to their $\overline{x}_i$'s. If EV $i$ comes from RDQ,  the amount of energy needed to be charged is $\overline{x}_i - \underline{x}_i$. Otherwise, the required charged amount is $\overline{x}_i - x_i(t_i^a)$.  We maintain a queue $\Psi_2$ to store the unqualified random numbers.

\subsubsection{RUQ}
We modify the previously discussed $M_{SC}$ for RUQ. Instead of setting charging rates, we manipulate the service times for the EVs only. The design principle  is also similar to that for RDQ but the random numbers should seemingly follow exponential distribution with mean $\mu_3$. All EVs in RUQ do not need to be actively charged, and thus, the rate constraint like \eqref{eq:rateCons} is not required. We maintain a queue $\Psi_3$ to store the unqualified random numbers. 
When an EV is assigned with a service time shorter than its own expected duration of stay, it can still physically park at the parking infrastructure but we just disconnect it from the system. 

We need a performance metric to determine how well $M_{SC}$ facilitates the analytical model discussed in Section \ref{subsec:model}. We have Theorem \ref{thmresult} to introduce the queue length of $\Psi_1$ to evaluate the performance of $M_{SC}$. It says that if the queue length remains finite, the service durations assigned to the EVs will follow an exponential distribution in the long run. The details can be found as follows:
\begin{theorem} \label{thmresult}
Let $N(t)$ be the queue length of $\Psi_1$ at time $t$, and $y(t)=[y_1,\ldots,y_t]$ and $w(t)=[w_1,\ldots,w_t]$ be the random number sequence generated from an exponential distribution with mean $\mu_1$ and the set of qualified service times adopted by the first $t$ participating EVs, respectively, at RDQ.
 If $t$ tends to infinity and $\lim_{t\rightarrow \infty}N(t)<\infty$
holds, $w(t)$ is exponentially distributed with mean $\mu_1$ almost surely.
\end{theorem}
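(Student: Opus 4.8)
The plan is to exploit the fact that the mechanism never creates service-time values on its own: every adopted $w_i$ is literally one of the numbers $y_j$ drawn from the i.i.d.\ exponential stream, merely relocated to a later EV and deleted from $\Psi_1$ once used. Consequently, at the instant the $t$-th EV has been served, the multiset of adopted values $\{w_1,\ldots,w_t\}$ together with the $N(t)$ values still residing in $\Psi_1$ is exactly the multiset of the first $G(t):=t+N(t)$ generated numbers. Lemma~\ref{lemma:permutation} and Corollary~\ref{col:realization} already tell us that an arbitrary reordering of an i.i.d.\ realization is again a realization of the same law; the only discrepancy between $w(t)$ and a genuine length-$t$ realization is therefore the set of $N(t)$ numbers held back in $\Psi_1$.

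First I would recast the claim in terms of empirical distribution functions, which is the natural sense in which a single realization ``is exponentially distributed.'' Let $F$ denote the exponential CDF with mean $\mu_1$, let $F^Y_n$ be the empirical CDF of the first $n$ generated numbers $y_1,\ldots,y_n$, and let $F^W_t$ be the empirical CDF of $w(t)=[w_1,\ldots,w_t]$. Since $w(t)$ is obtained from the first $G(t)$ generated numbers by deleting the $N(t)$ entries currently in $\Psi_1$, a direct counting identity at any threshold $x$ gives the uniform estimate
\begin{align}
\sup_{x}\bigl|F^W_t(x)-F^Y_{G(t)}(x)\bigr|\;\le\;\frac{N(t)}{t}.
\label{eq:del}
\end{align}
Crucially, this bound is insensitive to \emph{which} values are deleted, and that is exactly what makes the argument go through.

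Next I would pass to the limit. The hypothesis $\lim_{t\to\infty}N(t)=N^{*}<\infty$ forces $N(t)/t\to 0$, so the right-hand side of \eqref{eq:del} vanishes, and simultaneously $G(t)=t+N(t)\to\infty$. By the Glivenko--Cantelli theorem applied to the i.i.d.\ exponential stream $\{y_j\}$, we have $\sup_{x}|F^Y_n(x)-F(x)|\to 0$ almost surely as $n\to\infty$; evaluating this convergence along the (random) index $n=G(t)$ and combining with \eqref{eq:del} through the triangle inequality yields $\sup_{x}|F^W_t(x)-F(x)|\to 0$ almost surely. Hence the empirical law of the adopted service times converges to the exponential distribution with mean $\mu_1$ almost surely, which is the assertion.

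The main obstacle, and the reason the finiteness hypothesis on $N(t)$ is indispensable, is that the numbers trapped in $\Psi_1$ need not form a representative sample: they are precisely the values that failed the duration and rate constraints \eqref{eq:durationCons}--\eqref{eq:rateCons} for every EV seen so far, and could therefore be systematically atypical (e.g.\ unusually large). One cannot argue that discarding them is harmless on distributional grounds; the only available leverage is cardinality. Bounded $N(t)$ guarantees that a vanishing \emph{fraction} of the growing sample is withheld, so the perturbation in \eqref{eq:del} disappears regardless of the bias. A minor technical point I would flag is that $G(t)$ is a random index, so I must note that Glivenko--Cantelli, being an almost-sure statement about the deterministic limit $n\to\infty$, transfers automatically to the random subsequence $G(t)$ because $G(t)\to\infty$ on the same almost-sure event.
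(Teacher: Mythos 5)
Your proof is correct, and it takes a genuinely different---and substantially more rigorous---route than the paper's. The paper's own proof is qualitative: it splits the generated numbers into those adopted and those stored in $\Psi_1$, notes that $\lim_{t\rightarrow\infty}N(t)<\infty$ means only finitely many numbers are permanently withheld, and then invokes Corollary~\ref{col:realization} to argue that re-ordering does not destroy the exponential character, so that $w(t)$ ``inherits'' the property of $y(t)$. You instead make the conclusion precise as uniform convergence of the empirical CDF of $w(t)$ to the exponential CDF, prove the quantitative deletion bound $\sup_x\bigl|F^W_t(x)-F^Y_{G(t)}(x)\bigr|\leq N(t)/t$ via the multiset identity $\{w_1,\ldots,w_t\}\cup\Psi_1 = \{y_1,\ldots,y_{G(t)}\}$, and finish with Glivenko--Cantelli along the index $G(t)=t+N(t)\rightarrow\infty$. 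Your route buys two things. First, re-ordering is handled for free, since empirical CDFs are permutation-invariant, so you never need Corollary~\ref{col:realization} at all. Second, you squarely address the point the paper glosses over: the numbers held in $\Psi_1$ are not a random subsample but are systematically biased (they are precisely the values failing \eqref{eq:durationCons} and \eqref{eq:rateCons} for every EV seen so far), so no symmetry or permutation argument can justify discarding them; the only available leverage is cardinality, i.e., $N(t)/t\rightarrow 0$, which is exactly what your bound exploits. Indeed, your empirical formalization is arguably the only reading under which the theorem is true as stated: the random sequence $w_1,w_2,\ldots$ is \emph{not} i.i.d.\ exponential in distribution (for instance, $w_1$ is an exponential variate conditioned to satisfy EV~1's feasibility constraints), so ``exponentially distributed almost surely'' must be interpreted in the Glivenko--Cantelli sense you adopt. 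What the paper's argument buys in exchange is brevity and intuition; yours supplies the missing precision and an explicit error rate.
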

\begin{proof}
$y_k$'s in $y(t)$ are split into two sets, either becoming $w_k$'s in $w(t)$ or being stored in $\Psi_1$.
The condition $\lim_{t\rightarrow \infty}N(t)<\infty$ implies that only a finite number of $ y_k$'s go to $\Psi_1$ and thus $w(t)$ must contain an infinite number of $w_k$'s. When being assigned to $w(t)$, some $y_k$'s may have been re-ordered;  $y_k$ may have been assigned to $w_{k'}$, where $k'>k$. By Corollary  \ref{col:realization}, permutations of $y_k$'s in the sequence does not affect the exponential nature of the sequence. Although $y(t)$ and $w(t)$ are different, $w(t)$ has inherited $y(t)$'s exponential distribution property almost surely.
\end{proof}

Similar results also hold for RUDQ and RUQ.
Note that we do not need to specify any requirement of the inputs; the initial SOCs $x_i(t_i^a)$, the duration of stays \ $t_i^d - t_i^a$, and the lower and upper SOC target thresholds $\underline{x}_i$ and $\overline{x}_i$ can follow \textit{unknown} distributions. The smart charging mechanism can adapt to the characteristics of the inputs and generate exponentially distributed service times with means specified for the queues. 
\section{Performance Evaluation} \label{sec:perf}

\subsection{System Parameter Settings} \label{subsec:parameter}
We study the performance of the system with a parking structure example, where EVs arrive
and leave independently.  Consider a scenario that there are five EVs entering the
parking structure per minute following a Poisson distribution on the average.  90\% of EVs require charging, where their SOCs
are below their upper target thresholds at their arrivals.  One tenth of them
do not, since they need parking only and their SOCs are above their respective
upper target thresholds.  Among those requiring a charge, based on \cite{SOCdistribution}, we assume their initial SOCs $x_i(t_i^a)$'s follow a truncated Normal distribution in the range of [0, 1] with mean 0.5 and standard deviation 0.2. The EVs expect to be
charged up to their SOC upper target thresholds $\overline{x}_i$'s, each of which is set with a truncated Normal distribution in the range of $[x_i(t_i^a),1]$ with mean $x_i(t_i^a)+ 0.5 (1-x_i(t_i^a))$ and standard deviation $0.1 (1-x_i(t_i^a))$ for EV $i$. The SOC lower target threshold $\underline{x}_i$ of EV $i$ is set to $\underline{x}_i = \overline{x}_i\times \text{rand}[0.6,0.8]$, where $\text{rand}[0.6,0.8]$ is a random number uniformly generated in $[0.6,0.8]$. 
There are different charging standards and the required charging times for typical EV models vary from 20 minutes to 8 hours \cite{chargingDuration}. Moreover, the durations of parking also depend on the drivers' driving practice, the purposes of parking, and the initial SOCs. We assume that the durations of stay (i.e., $t_i^d -t_i^a$ for EV $i$) are normally distributed and truncated in the range of [60, 780] minutes with mean 420 minutes and standard deviation 60 minutes. With these settings, we classify EVs into states according to their SOCs and simulation gives $p_1 \approx 0.5$, $p_2 \approx 0.4$, and $p_3 = 0.1$. Based on the current charging technologies, we set the range $[\underline{r},\overline{r}]$ of the charging rates as [0, 0.05], where $\overline{r}$ corresponds to fast charging \cite{fastcharging}. We also have $q_1 = q_2 = 0.1$.
The above shows how to determine the system parameters, $\lambda$, $p_1$, $p_2$, $p_3$, $q_1$, and $q_2$, for our illustrative scenario. Note that we do not require any specific EV arrival rates and distributions for SOC conditions and total durations of stays. Different scenarios just give different parameter settings.

\subsection{Results for a Reference Set of $\mu_1$, $\mu_2$, and $\mu_3$} \label{subsec:mu123}
The only parameters that we can control are the expected service times of the EVs, in terms of $\mu_1$, $\mu_2$, and $\mu_3$, at the three queues by setting corresponding charging rates and/or disconnecting EVs from the system. Smaller values of $\mu_1$, $\mu_2$, and $\mu_3$ give larger regulation capacities but inappropriate values may result in remarkable errors between the analytical and actual capacities. We first define a reference set of $\mu_1 = \frac{1}{50} \text{ min}^{-1}$, $\mu_2 = \frac{1}{70} \text{ min}^{-1}$, $\mu_3 = \frac{1}{30} \text{ min}^{-1}$. In this way, EVs in State~1 will spend 50~minutes in this state
on the average. When they exit State 1, 10\% of them leave the system, while the
 other 90\% of them transit to State 2 and continue to  charge up their
batteries to their upper target thresholds  with a mean service
time of 70~minutes.  Next 10\% of them leave the system from State 2. 
and the rest of the EVs stay in State~3 (without
charging) with the mean residence time equal to 30~minutes. 
By \eqref{L_1}, \eqref{L_2}, and \eqref{L_3}, the expected numbers of EVs in
States~1-3 are $L_1 = 127.32$, $L_2 = 296.55$, and $L_3 = 129.65$ in the steady
state.

We set $P_{EV} = 6$~kW and $\Delta t_{reg} = 1$~min.  Thus, each EV absorbs
or delivers $0.1\text{ kWh}$ for each regulation service.  When compared with
the EV models already available in the market, such small charging and
discharging rates of $P_{EV}$ result in a regulation event where the involved
EVs do not switch states when supporting regulation (i.e., no transition to
other queues merely for regulation).  For example, the Tesla Model~S has a
battery capacity ranging from 40~kWh to 85~kWh~\cite{tesla}, and BYD~e6 has a
battery capacity of 60~kWh~\cite{byd}.  With \eqref{Crd} and \eqref{Cru}, the
expected RD and RU capacities are $C_{RD} = 6\text{ kW} \times (127.32 + 296.55) = 2543.22\text{ kW}$ and $C_{RU} = 6\text{ kW} \times (296.55 + 129.65) = 2557.19\text{ kW}$, respectively.

\begin{figure}[!t]
	\begin{center}
		\subfigure[RDQ]{\label{fig:Q1cap}\includegraphics[width=3.5in]{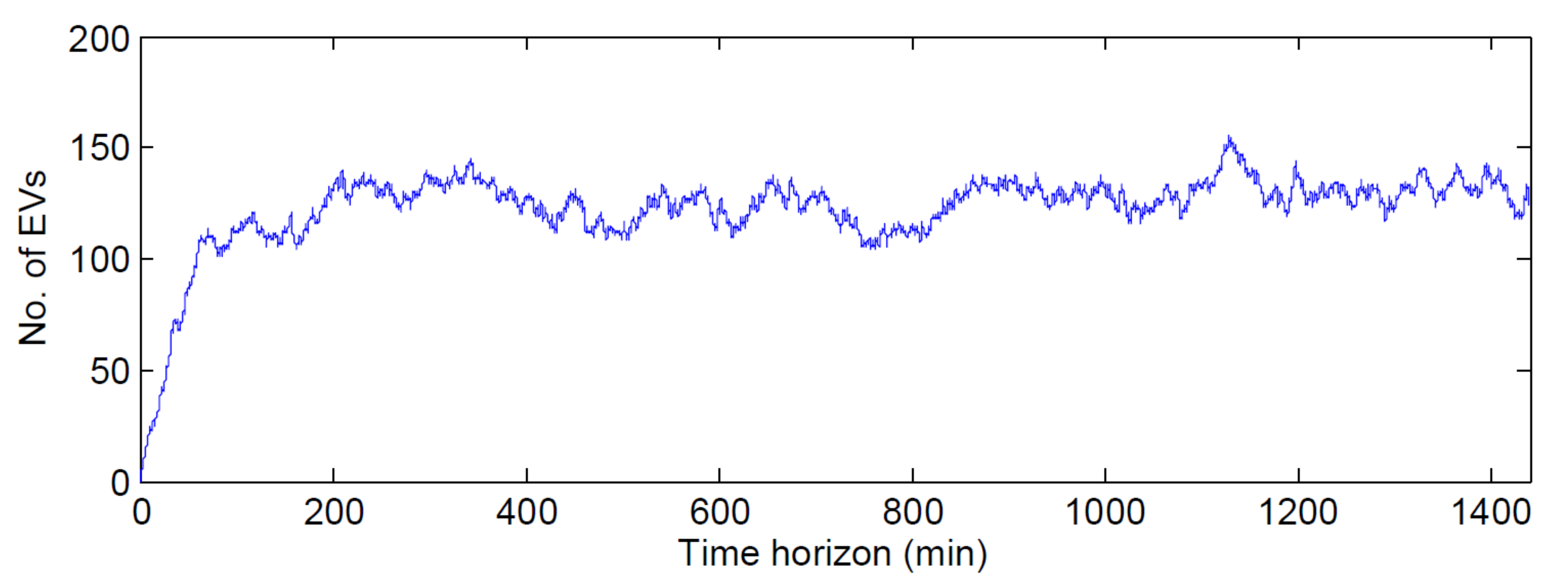}}
    	\subfigure[RUDQ]{\label{fig:Q2cap}\includegraphics[width=3.5in]{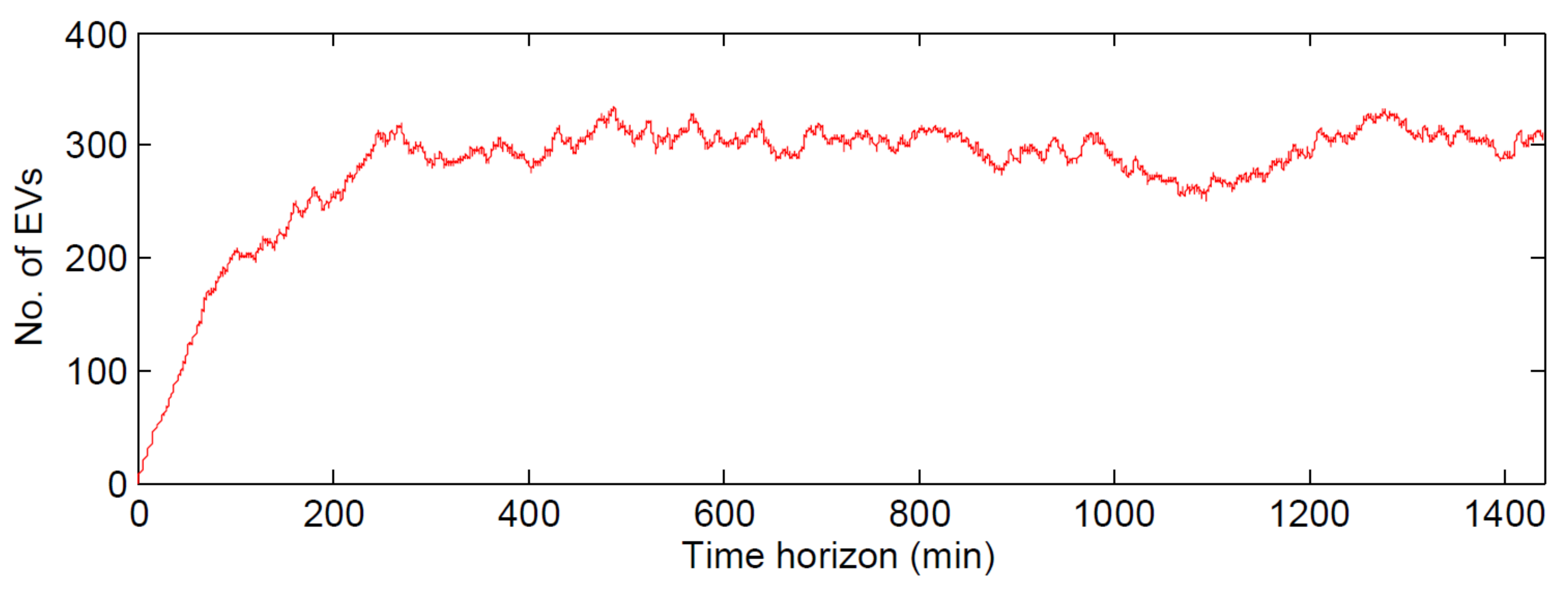}} 
    	\subfigure[RUQ]{\label{fig:Q3cap}\includegraphics[width=3.5in]{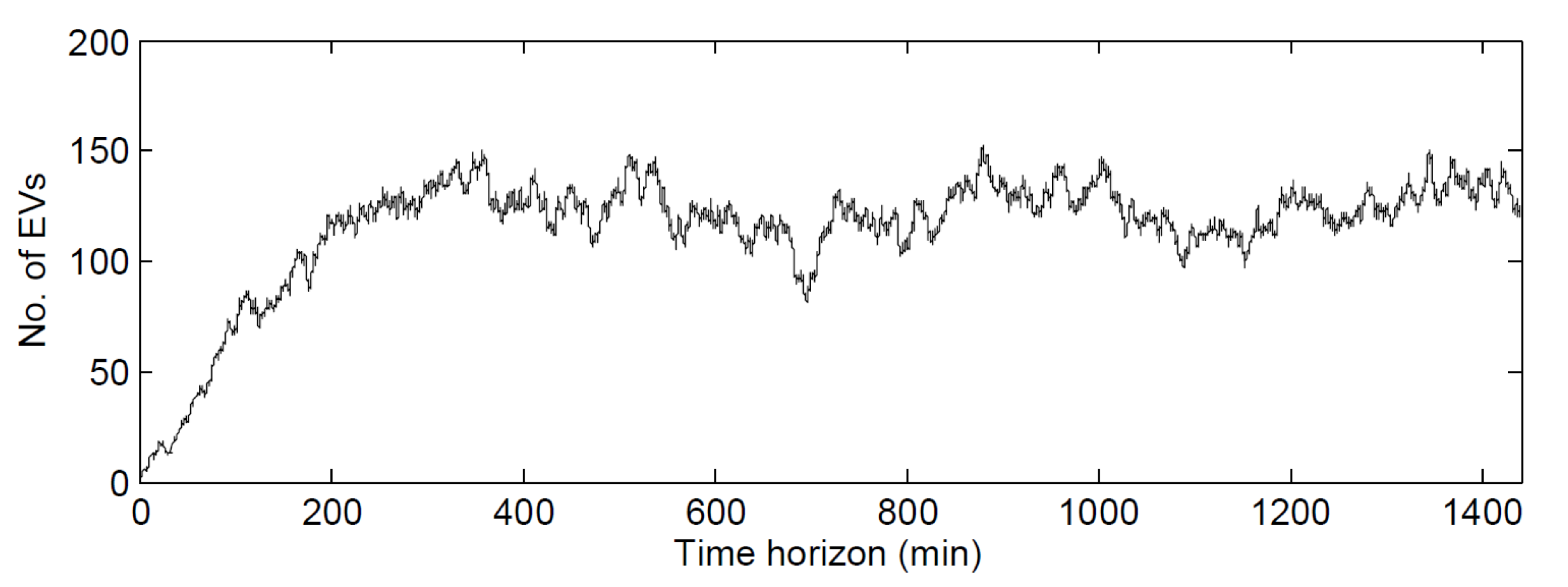}}
	\end{center}
	\caption{Variations of the number of EVs in each queue.}
  \label{fig:t_sim}
\end{figure}

We simulate the instantaneous numbers of EVs getting served in the queues for
1440~minutes.  The simulation was implemented with Matlab.  The results
are exhibited in Fig.~\ref{fig:t_sim}.  The system is initially empty and it
takes about 200~minutes to reach the steady state, where the numbers of EVs in
the queues oscillate around our computed expected values.

Fig. \ref{fig:scm_queue} shows the queue length variations of $\Psi_1$, $\Psi_2$, and $\Psi_3$ for $M_{SC}$ generated from the same simulation run as in Fig. \ref{fig:t_sim}. Fig. \ref{fig:scmQ1} illustrates the variations of queue length of $\Psi_1$ corresponding to the variations of EV population size shown in Fig. \ref{fig:Q1cap}. Similarly, Figs. \ref{fig:scmQ2} and \ref{fig:scmQ3} correspond to Figs. \ref{fig:Q2cap} and \ref{fig:Q3cap}. In spite of the existence of some blips, the sizes of $\Psi_1$, $\Psi_2$, and $\Psi_3$ do not grow continuously in the time horizon. 
By Theorem \ref{thmresult}, the actual system supported by the smart charging mechanism will generally follow the analytical results developed in Section \ref{subsec:model} in the long run.
$\Psi_1$ and $\Psi_3$ can return to empty queues from time to time and this implies that all their unqualified random numbers for service times at particular instants can be adopted in their next few instants. The length of $\Psi_2$ is stabilized at around 15 and it is likely that the adoption rate and the generation rate of unqualified random numbers are similar.

\begin{figure}[!t]
	\begin{center}
		\subfigure[$\Psi_1$]{\label{fig:scmQ1}\includegraphics[width=3.5in]{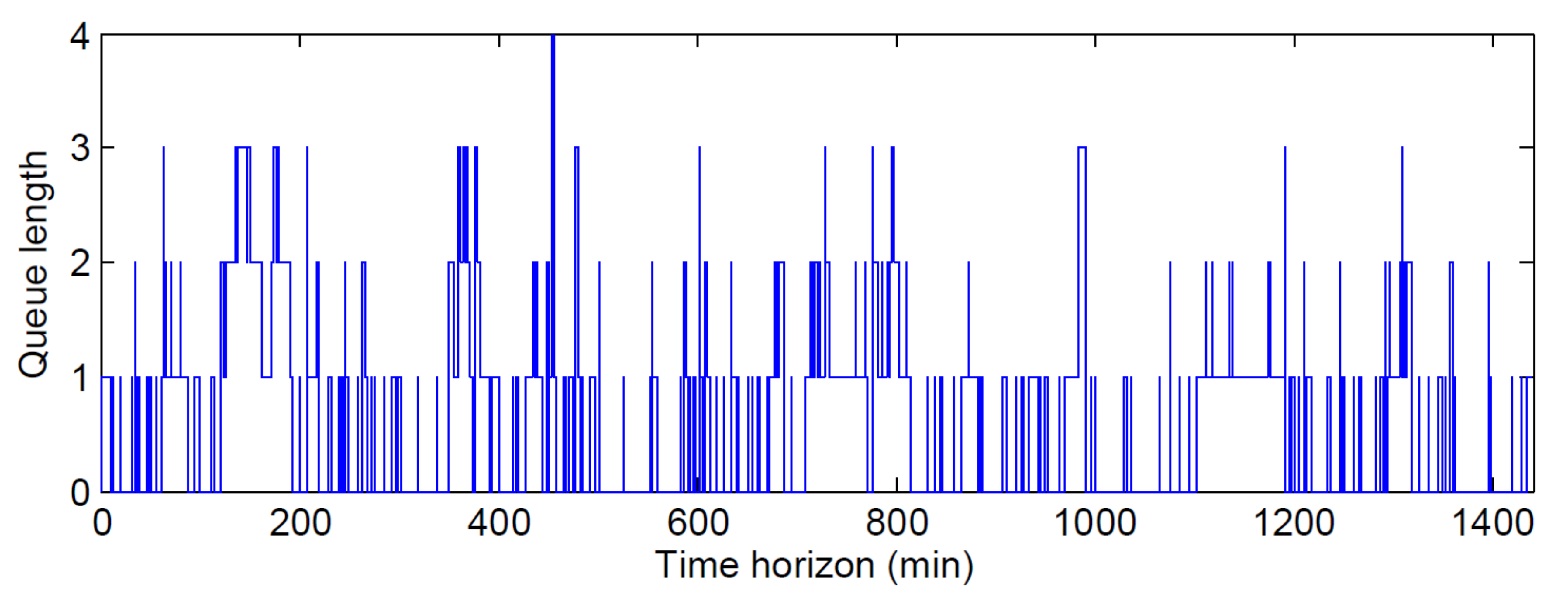}}
    	\subfigure[$\Psi_2$]{\label{fig:scmQ2}\includegraphics[width=3.5in]{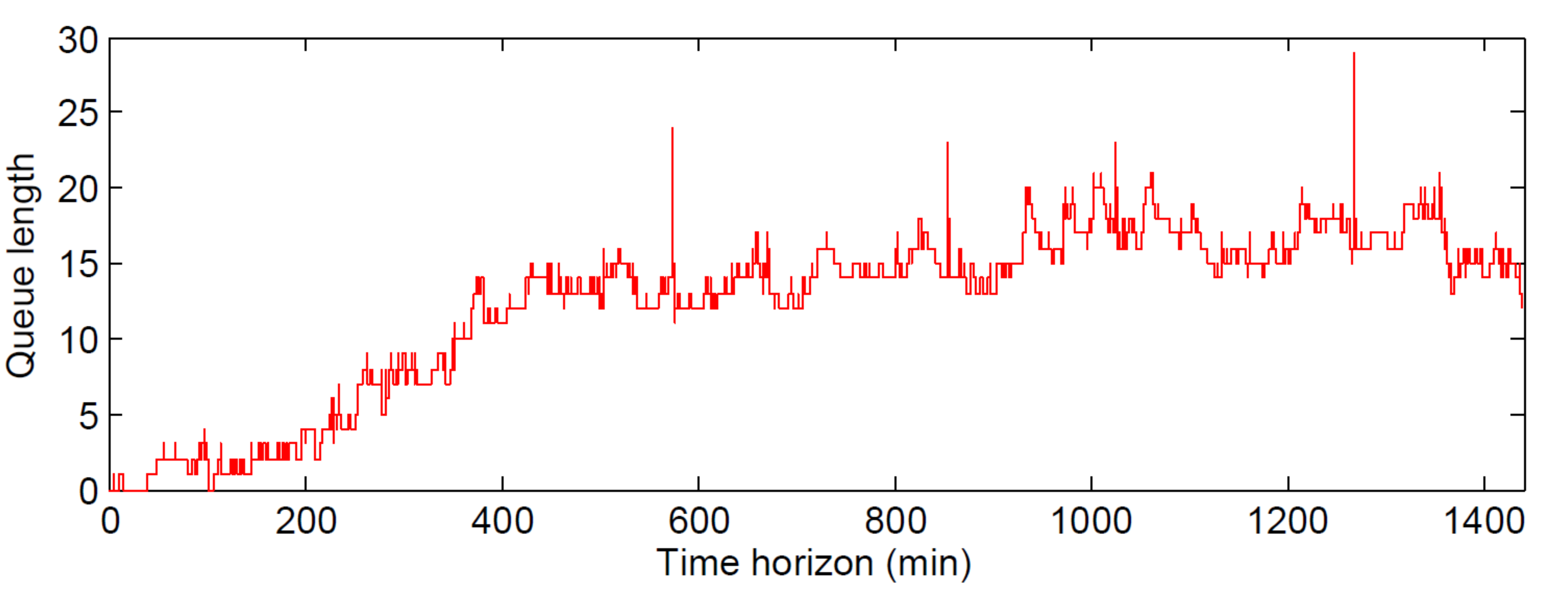}} 
    	\subfigure[$\Psi_3$]{\label{fig:scmQ3}\includegraphics[width=3.5in]{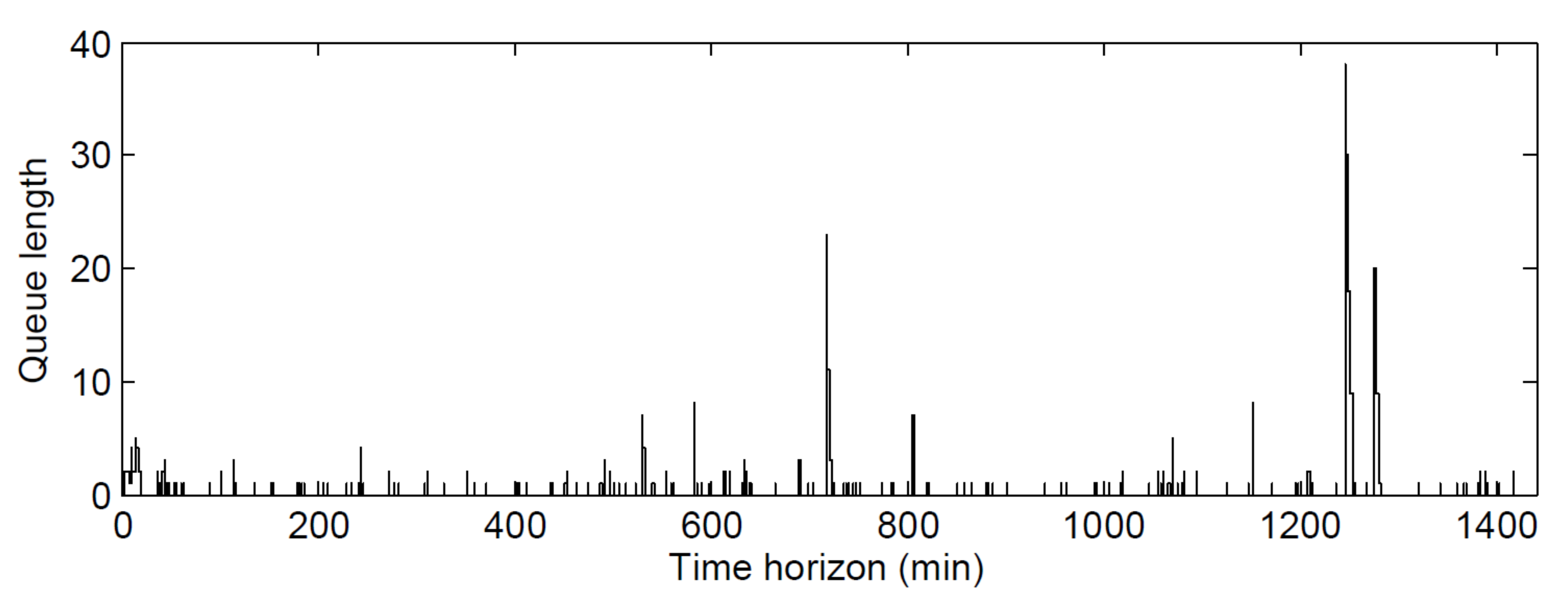}}
	\end{center}
	\caption{Variations of the queue lengths for $M_{SC}$.}
  \label{fig:scm_queue}
\end{figure}

\subsection{Effects of Different Values of $\mu_1$, $\mu_2$, and $\mu_3$}
Here we examine how the values of $\mu_1$, $\mu_2$, and $\mu_3$ affect the system performance. We consider  the combinations of $\mu_1 = \frac{1}{50} \text{ min}^{-1}$, $\mu_2=\frac{1}{70} \text{ min}^{-1}$, and $\mu_3=\frac{1}{30} \text{ min}^{-1}$ adopted in Section \ref{subsec:mu123} as a reference. We vary either $\mu_1$, $\mu_2$, or $\mu_3$ from the reference each time to investigate the changes in performance.

Fig. \ref{fig:scm_all} gives the simulated RU and RD capacities and the errors between analytical and simulated capacities for different combinations of values of $\mu_1$, $\mu_2$, or $\mu_3$, where each error is computed by $\frac{\text{simulated value - analytical value}}{\text{analytical value}}$.
In Fig.~\ref{fig:scm_all}, ``C'' and ``E'' stand for ``capacity'' and ``error'', respectively. Thus, for example, C\_RD represents the simulated RD capacity. Each data point is the average of 100 random cases generated from the settings as in Section \ref{subsec:parameter}. We change the values of $\mu_1$, $\mu_2$, and $\mu_3$, with respect to the reference, in Figs. \ref{fig:mu1}, \ref{fig:mu2}, and \ref{fig:mu3}, respectively. 
In Fig. \ref{fig:mu1}, the RD capacity increases with $\frac{1}{\mu_1}$, i.e., decreases with $\mu_1$. The RU capacity does not change as it does not relate to RDQ. In Fig. \ref{fig:mu2}, both capacities grow with $\frac{1}{\mu_2}$ since they both involve RUDQ. In Fig. \ref{fig:mu3}, the RU capacity increases with $\frac{1}{\mu_3}$ while the RD capacity becomes insensitive. It is because only the former is related to RUQ.
The trends of the error are similar. On the average, the simulated capacities are always smaller than the analytical ones and the discrepancies grow with $\frac{1}{\mu_1}$, $\frac{1}{\mu_2}$, and $\frac{1}{\mu_3}$. 
A certain part of the error is due to the transient state of the system as the analytical results only illustrate the steady state performance.
The rest comes from the growth of the queue lengths of $\Psi_1$, $\Psi_2$, and $\Psi_3$ for $M_{SC}$ because the probability of generating large and unqualified service times is higher with smaller $\mu_1$, $\mu_2$, or $\mu_3$. 
Therefore, there is a tradeoff between the resultant capacity and error. To construct the system with larger capacities, we need to tolerate larger errors. In practice, we set the values for $\mu_1$, $\mu_2$, and $\mu_3$ based on our requirements for the expected capacities and their accuracies.
Moreover, for a real parking structure, the characteristics of the arriving EVs may change with the time of day. Based on some historical data, we may divide a day into several periods, where the EV characteristics are more or less similar in each period. Then an appropriate combination of $\mu_1$, $\mu_2$, and $\mu_3$ can be set for each period and the capacities of each period can be deduced accordingly.

\begin{figure}[!t]
	\begin{center}
		\subfigure[Change in $\mu_1$]{\label{fig:mu1}\includegraphics[width=3.5in]{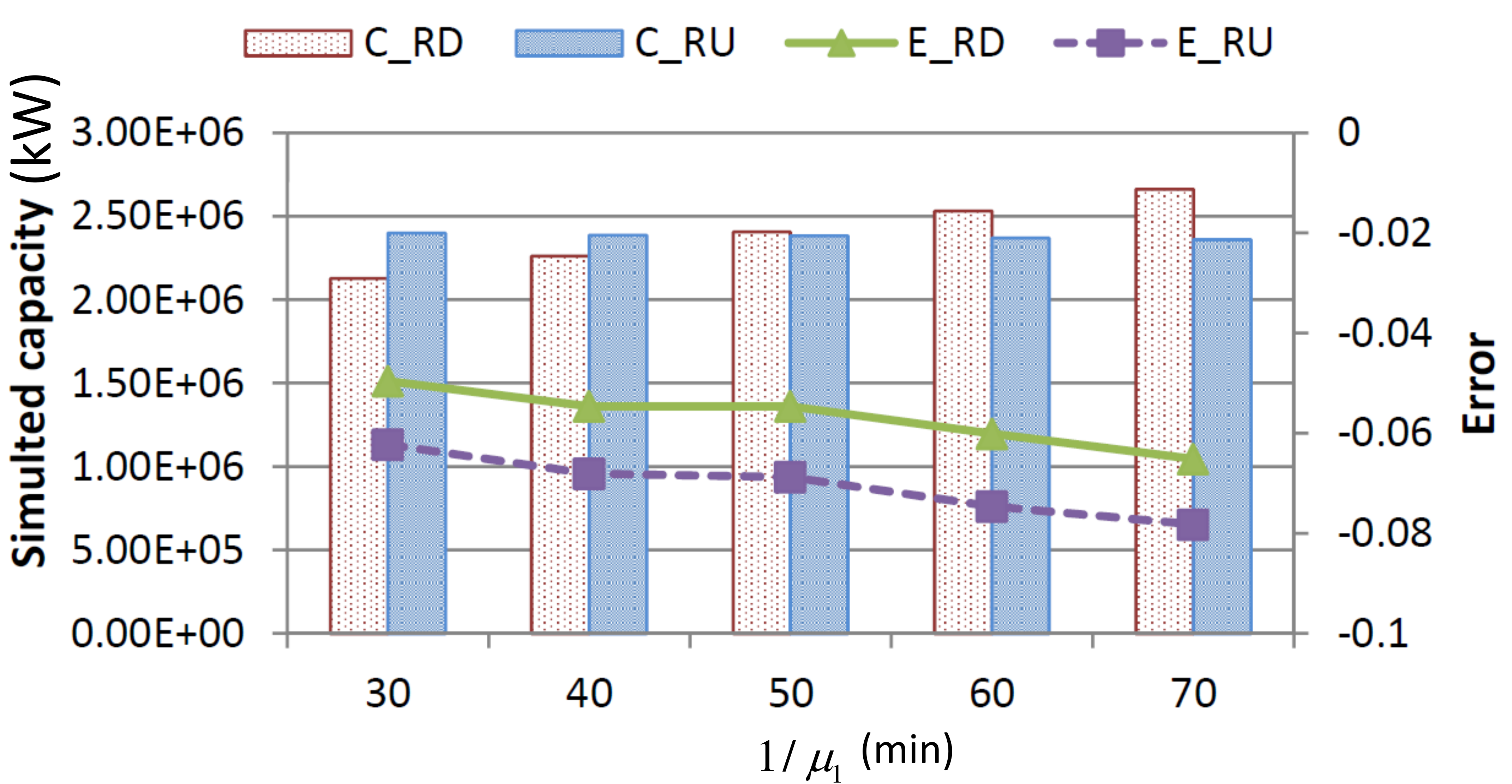}} 
    	\subfigure[Change in $\mu_2$]{\label{fig:mu2}\includegraphics[width=3.5in]{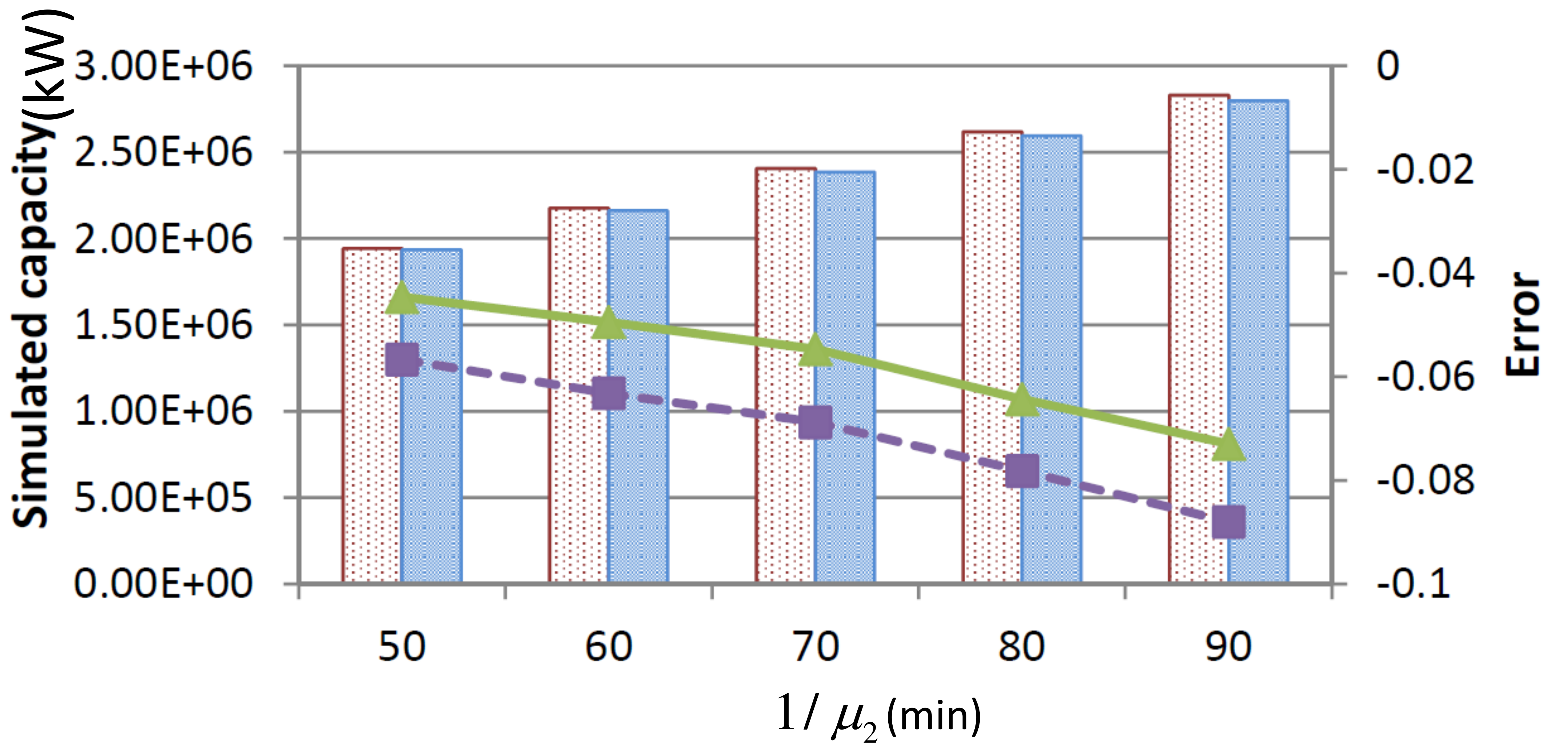}} 
    	\subfigure[Change in $\mu_3$]{\label{fig:mu3}\includegraphics[width=3.5in]{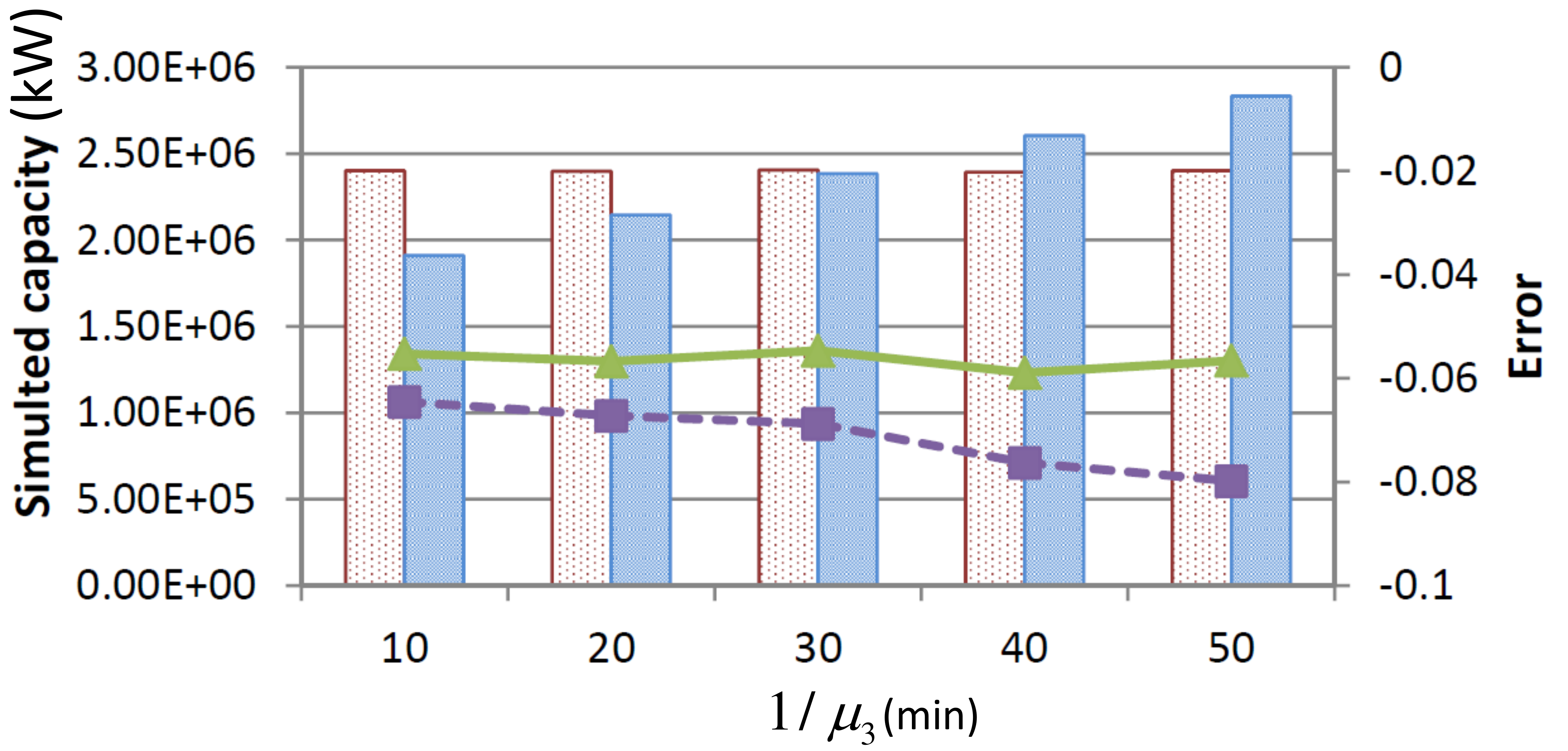}} 
	\end{center}
	\caption{Simulated capacities and the discrepancies between the analytical and simulated results.}
  \label{fig:scm_all}
\end{figure}

\section{Conclusion} \label{sec:concl}

With the expected higher penetration of renewable energy generation in smart grid, the
stochastic nature of the renewables will induce new challenges in matching the
actual power consumption and supply.  One measure to enforce power balance is
through regulation services.  
Traditional regulation services are mainly run by power plants and  very costly.  
The increasing social consensus on environmentally friendly transportation
 would lead to more reliance on EVs.  With
the embedded rechargeable batteries in EVs, a fleet of EVs can behave as a huge
energy buffer, absorbing excessive power from the smart grid or
supplying power to overcome the deficit.  This implies that an aggregation of EVs
is a practical alternative to support the regulation services of smart grid.
However, V2G is a dynamic system.  Each EV connects to and disconnects from the
system independently. Regulation through V2G can be realized only if we can capture the aggregate behavior of the EVs. However, in general, we cannot directly control the participating EVs and make them contribute according to our requirements. We can only estimate the collective contribution from the EVs while allowing them to behave autonomously. In this paper, we model an aggregation of EVs with a
queueing network.  The structure of the queueing network allows us to estimate
the RU and RD capacities separately. 
 The estimated capacities can help set up a regulation contract between an aggregator and a
grid operator so as to facilitate a new business model for V2G.
To make the results analytically tractable, the EV service durations need to be exponentially distributed. We introduce a smart charging mechanism to fulfill this requirement.
This mechanism does not require any specific patterns for the EVs' initial SOCs, the duration of stay, and their lower and upper SOC thresholds and it can adapt to the characteristics of the EVs and make the performance of the actual system follow the analytical model. 
To summarize, our contributions include: 1) proposing a queueing network model for V2G regulation services, 2) facilitating various regulation contracts by separating RU and RD capacities, 3) designing a smart charging mechanism to make the system adaptable to various characteristics of EVs, and 4) performing extensive simulations to verify the performance of the model.

\ifCLASSOPTIONcaptionsoff
  \newpage
\fi

\section*{Acknowledgment}
This research is supported in part by the Theme-based Research Scheme of the Research Grants Council of Hong Kong, under Grant No. T23-701/14-N. 

\bibliographystyle{IEEEtran}
\bibliography{IEEEabrv}

\begin{thebibliography}{10}
\providecommand{\url}[1]{#1}
\csname url@samestyle\endcsname
\providecommand{\newblock}{\relax}
\providecommand{\bibinfo}[2]{#2}
\providecommand{\BIBentrySTDinterwordspacing}{\spaceskip=0pt\relax}
\providecommand{\BIBentryALTinterwordstretchfactor}{4}
\providecommand{\BIBentryALTinterwordspacing}{\spaceskip=\fontdimen2\font plus
\BIBentryALTinterwordstretchfactor\fontdimen3\font minus
  \fontdimen4\font\relax}
\providecommand{\BIBforeignlanguage}[2]{{%
\expandafter\ifx\csname l@#1\endcsname\relax
\typeout{** WARNING: IEEEtran.bst: No hyphenation pattern has been}%
\typeout{** loaded for the language `#1'. Using the pattern for}%
\typeout{** the default language instead.}%
\else
\language=\csname l@#1\endcsname
\fi
#2}}
\providecommand{\BIBdecl}{\relax}
\BIBdecl

\bibitem{smartgridcomm2012}
A.~Y.~S. Lam, K.-C. Leung, and V.~O.~K. Li, ``Capacity management of
  vehicle-to-grid system for power regulation services,'' in \emph{Proc. of
  IEEE Int. Conf. on Smart Grid Comm.}, Nov. 2012, pp. 442--447.

\bibitem{CA2011energy}
``California's energy future: The view to 2050,'' California Council on Science
  and Technology, Tech. Rep., May 2011.

\bibitem{frequencyregulation}
B.~J. Kirby, ``Frequency regulation basics and trends,'' Oak Ridge National
  Laboratory, Tech. Rep. TM2004-291, Dec. 2004.

\bibitem{FERC}
\BIBentryALTinterwordspacing
(2010, May) Federal energy regulatory commission. [Online]. Available:
  \url{http://www.ferc.gov/market-oversight/guide/glossary.asp}
\BIBentrySTDinterwordspacing

\bibitem{PJM}
\BIBentryALTinterwordspacing
PJM. (2015, Apr.) {PJM} manual 11: Energy \& ancillary services market
  operations. [Online]. Available:
  \url{http://www.pjm.com/documents/manuals.aspx}
\BIBentrySTDinterwordspacing

\bibitem{AGC}
A.~R. Bergen and V.~Vittal, \emph{Power Systems Analysis}, 2nd~ed.\hskip 1em
  plus 0.5em minus 0.4em\relax Upper Saddle River, NJ: Prentice Hall, 2000,
  ch.~11.

\bibitem{iagc}
H.~Bevrani and T.~Hiyama, \emph{Intelligent Automatic Generation
  Control}.\hskip 1em plus 0.5em minus 0.4em\relax Boca Raton: CRC Press, 2011.

\bibitem{sellingwind}
E.~Bitar, K.~Poolla, P.~Khargonekar, R.~Rajagopal, P.~Varaiya, and F.~Wu,
  ``Selling random wind,'' in \emph{Proc. 45th Hawaii International Conference
  on Systems Science}, Jan. 2012, pp. 1931--1937.

\bibitem{evnum}
T.~A. Becker and I.~Sidhu, ``Electric vehicles in the united states: A new
  model with forecasts to 2030,'' Center for Entrepreneurship and Technology,
  University of California, Berkeley, Tech. Rep. 2009.1.v.2.0, Aug. 2009.

\bibitem{battery}
C.~Silva, M.~Ross, and T.~Farias, ``Evaluation of energy consumption, emissions
  and cost of plug-in hybrid vehicles,'' \emph{Elsevier Energy Conversion and
  Management}, vol.~50, no.~7, pp. 1635--1643, Jul. 2009.

\bibitem{MISObook}
Y.Chen, M.~Keyser, M.~H. Tackett, R.~Leonard, and J.~Gardner, ``Incorporating
  short-term stored energy resource into {MISO} energy and ancillary service
  market and development of performance-based regulation payment,'' in
  \emph{Energy Storage for Smart Grids}.\hskip 1em plus 0.5em minus 0.4em\relax
  Academic Press, Oct. 2014, ch.~6.

\bibitem{delaware}
\BIBentryALTinterwordspacing
W.~Kempton, V.~Udo, K.~Huber, K.~Komara, S.~Letendre, S.~Baker, D.~Brunner, and
  N.~Pearre. (2009, Jan.) A test of vehicle-to-grid ({V2G}) for energy storage
  and frequency regulation in the {PJM} system. [Online]. Available:
  \url{http://www.udel.edu/V2G/resources/test-v2g-in-pjm-jan09.pdf}
\BIBentrySTDinterwordspacing

\bibitem{financialreturn}
C.~D. White and K.~M. Zhang, ``Using vehicle-to-grid technology for frequency
  regulation and peak-load reduction,'' \emph{J. Power Sources}, vol. 196, pp.
  3972--3980, Apr. 2011.

\bibitem{traveltrends}
A.~Santos, N.~McGuckin, H.~Y. Nakamoto, D.~Gray, and S.~Liss, ``Summary of
  travel trends: 2009 national houseehold travel survey,'' Federal Highway
  Administration, U.S. Department of Transportation, Tech. Rep. FHWA-OL-11-022,
  Jun. 2011.

\bibitem{V2G_fundamentals}
W.~Kempton and J.~Tomic, ``Vehicle-to-grid power fundamentals: Calculating
  capacity and net revenue,'' \emph{J. Power Sources}, vol. 144, no.~1, pp.
  268--279, Jun. 2005.

\bibitem{networkqueueing}
C.-H. Ng and B.-H. Soong, \emph{Queueing Modelling Fundamentals: With
  Applications in Communication Networks}, 2nd~ed.\hskip 1em plus 0.5em minus
  0.4em\relax Hoboken, NJ: Wiley, 2008.

\bibitem{V2G_implementation}
W.~Kempton and J.~Tomic, ``Vehicle-to-grid power implementation: From
  stabilizing the grid to supporting large-scale renewable energy,'' \emph{J.
  Power Sources}, vol. 144, no.~1, pp. 280--294, Jun. 2005.

\bibitem{doublelayer}
A.~Y.~S. Lam, L.~Huang, A.~Silva, and W.~Saad, ``A multi-layer market for
  vehicle-to-grid energy trading in the smart grid,'' in \emph{Proc. 1st IEEE
  INFOCOM Workshop on Green Networking and Smart Grids}, Mar. 2012.

\bibitem{impact_review}
M.~Yilmaz and P.~T. Krein, ``Review of the impact of vehicle-to-grid
  technologies on distribution systems and utility interfaces,'' \emph{IEEE
  Tran. Power Electronics}, vol.~28, no.~12, pp. 5673--5689, Dec. 2013.

\bibitem{EV_loadflow}
R.~Garcia-Valle and J.~G. Vlachogiannis, ``Letter to the editor: Electric
  vehicle demand model for load flow studies,'' \emph{Electric Power Components
  and Systems}, vol.~37, no.~5, pp. 577--582, Sep. 2009.

\bibitem{V2GParking}
U.~C. Chukwu and S.~M. Mahajan, ``{V2G} parking lot with {PV} rooftop for
  capacity enhancement of a distribution system,'' \emph{IEEE Trans.
  Sustainable Energy}, vol.~5, no.~1, pp. 119--127, Jan. 2014.

\bibitem{chargingmodel}
S.~J. Baek, D.~Kim, S.-J. Oh, and J.-A. Jun, ``Modeling of electric vehicle
  charging systems in communications enabled smart grids,'' \emph{IEICE Trans.
  on Information and Systems}, vol. E94-D, no.~8, pp. 1708--1711, Aug. 2011.

\bibitem{optimal_aggregator}
S.~Han, S.~Han, and K.~Sezaki, ``Development of an optimal vehicle-to-grid
  aggregator for frequency regulation,'' \emph{IEEE Trans. Smart Grid}, vol.~1,
  no.~1, pp. 65--72, Jun. 2010.

\bibitem{V2G_QP}
------, ``Optimal control of the plug-in electric vehicles for {V2G} frequency
  regulation using quadratic programming,'' in \emph{Proc. of IEEE PES
  Innovative Smart Grid Technologies}, Jan. 2011.

\bibitem{estimation_capacity}
------, ``Estimation of achievable power capacity from plug-in electric
  vehicles for {V2G} frequency regulation: Case studies for market
  participation,'' \emph{IEEE Trans. Smart Grid}, vol.~2, no.~4, pp. 632--641,
  Dec. 2011.

\bibitem{V2GCap_dynamicEV}
K.~N. Kumar, B.~Sivaneasan, P.~H. Cheah, P.~L. So, and D.~Z.~W. Wang, ``{V2G}
  capacity estimation using dynmaic {EV} scheduling,'' \emph{IEEE Trans. Smart
  Grid}, vol.~5, no.~2, pp. 1051--1060, Mar. 2014.

\bibitem{time-varying_price}
S.~Bashash and H.~K. Fathy, ``Cost-optimal charging of plug-in hybrid electric
  vehicles under time-varying electricity price signals,'' \emph{IEEE Tran.
  Intelligent Transportation Systems}, To appear.

\bibitem{online_scheduling}
J.~Lin, K.-C. Leung, and V.~O.~K. Li, ``Online scheduling for vehicle-to-grid
  regulation service,'' in \emph{Proc. of IEEE Int. Conf. on Smart Grid Comm.},
  Oct. 2013, pp. 43--48.

\bibitem{charging_net}
R.~A. Verzijlbergh, M.~O.~W. Grond, Z.~Lukszo, J.~G. Slootweg, and M.~D. Ilic,
  ``Network impacts and cost savings of controlled ev charging,'' \emph{IEEE
  Trans. Smart Grid}, vol.~3, no.~3, pp. 1203--1212, Sept. 2012.

\bibitem{charging_flow}
L.~Hua, J.~Wang, and C.~Zhou, ``Adaptive electric vehicle charging coordination
  on distribution network,'' \emph{IEEE Trans. Smart Grid}, vol.~5, no.~6, pp.
  2666--2675, Nov. 2014.

\bibitem{charging_price}
Y.~Cao, S.~Tang, C.~Li, P.~Zhang, Y.~Tan, Z.~Zhang, and J.~Li, ``An optimized
  {EV} charging model considering {TOC} price and {SOC} curve,'' \emph{IEEE
  Trans. Smart Grid}, vol.~3, no.~1, pp. 388--393, Mar. 2012.

\bibitem{charging_building}
Y.-M. Wi, J.~uk~Lee, and S.-K. Joo, ``Electric vehicle charging method for
  smart homes/buildings with a photovoltaic system,'' \emph{{IEEE} Trans.
  Consum. Electron.}, vol.~59, no.~2, pp. 323--328, May 2013.

\bibitem{charging_mobility}
M.~Wang, H.~Liang, R.~Zhang, R.~Deng, and X.~Shen, ``Mobility-aware coordinated
  charging for electric vehicles in {VANET}-enhanced smart grid,'' \emph{{IEEE}
  J. Sel. Areas Commun.}, vol.~32, no.~7, pp. 1344--1360, Jul. 2014.

\bibitem{queueingtheory}
D.~Gross, J.~F. Shortle, J.~M. Thompson, and C.~M. Harris, \emph{Fundamentals
  of Queueing Theory}, 4th~ed.\hskip 1em plus 0.5em minus 0.4em\relax Hoboken,
  NJ: Wiley, 2008.

\bibitem{datanetworks}
D.~P. Bertsekas and R.~G. Gallager, \emph{Data Networks}, 2nd~ed.\hskip 1em
  plus 0.5em minus 0.4em\relax Englewood Cliffs, NJ: Prentice Hall, 1992.

\bibitem{SOCdistribution}
R.-C. Leou, C.-L. Su, and C.~N. Lu, ``Stochastic analyses of electric vehicle
  charging impacts on distribution network,'' \emph{IEEE Trans. Power Syst.},
  vol.~29, no.~3, pp. 1055--1063, May 2014.

\bibitem{chargingDuration}
\BIBentryALTinterwordspacing
Wikipedia. (2014, Feb.) Charging station. [Online]. Available:
  \url{http://en.wikipedia.org/wiki/Charging_station}
\BIBentrySTDinterwordspacing

\bibitem{fastcharging}
\BIBentryALTinterwordspacing
{U.S. Department of Energy}. (2014, Feb.) Developing infrastructure to charge
  plug-in electric vehicles. [Online]. Available:
  \url{http://www.afdc.energy.gov/fuels/electricity_infrastructure.html}
\BIBentrySTDinterwordspacing

\bibitem{tesla}
\BIBentryALTinterwordspacing
(2012, May) Tesla motors. [Online]. Available: \url{http://www.teslamotors.com}
\BIBentrySTDinterwordspacing

\bibitem{byd}
\BIBentryALTinterwordspacing
B.~Berman. (2011, 9~Jan.) Exclusive: {BYD} announces breakthrough {U.S.}
  pricing for chinese long-range electric cars. [Online]. Available:
  \url{http://www.plugincars.com/exclusive-byd-executive-provides-breakthrough-us-pricing-chinese-electric-car.html}
\BIBentrySTDinterwordspacing

\end{thebibliography}

\end{document}